\newcommand{\pset}{\mathcal{P}}
\newcommand{\Exp}{\mathsf{Exp}}
\newcommand{\seq}{\mathbin{;}}
\newcommand{\bN}{\mathbb{N}}
\newcommand{\bQ}{\mathbb{Q}^{+}}
\newcommand{\eR}{\overline{\mathbb{R}}}
\newcommand{\e}{\varepsilon}
\newcommand{\acirel}{{\mathrel{\dot\equiv}}}
\newcommand{\aciq}{{\Exp}/{\acirel}}
\newcommand{\gen}[2]{\langle #1 \rangle_{{#2}}}
\newcommand{\set}[2]{\left\{ #1 \ifempty{#2}{}{\mid #2} \right\}}
\newcommand{\sem}[1]{\llbracket #1 \rrbracket }
\providecommand*{\ifempty}[3]{\ifthenelse{\isempty{#1}}{#2}{#3}}
\newcommand{\TT}[2][]{\mathbb{T}\ifempty{#2}{\ifempty{#1}{}{(#1)}}{\ifempty{#1}{(#2)}{(#1,#2)}}}  
\newcommand{\Sub}[1][]{\ifempty{#1}{\mathcal{S}}{\mathcal{S}(#1)}} 
\newcommand{\E}[1][]{\ifempty{#1}{\mathcal{E}}{\mathcal{E}(#1)}} 
\newcommand{\U}[1][]{{\mathcal{U}^{#1}}} 
\newcommand{\Refl}{\textsf{Refl}} 
\newcommand{\Symm}{\textsf{Symm}} 
\newcommand{\Triang}{\textsf{Triang}} 
\newcommand{\Max}{\textsf{Max}} 
\newcommand{\Cont}{\textsf{Cont}} 
\newcommand{\Nexp}{\textsf{NExp}} 
\newcommand{\Subst}{\textsf{Subst}} 
\newcommand{\Cut}{\textsf{Cut}} 
\newcommand{\Assum}{\textsf{Assum}} 
\newcommand{\Top}{\textsf{Top}}
\newcommand{\dPref}[1][\lambda]{\textsf{$#1$-Pref}}
\newcommand{\A}{\mathcal{A}} 
\newcommand{\B}{\mathcal{B}} 
\title{A Complete Quantitative Axiomatisation of Behavioural Distance of Regular Expressions} 
\author{Wojciech Różowski}{Department of Computer Science, University College London, United Kingdom \and \url{https://wkrozowski.github.io}}{w.rozowski@cs.ucl.ac.uk}{https://orcid.org/0000-0002-8241-7277}{}
\authorrunning{W. Różowski} 
\keywords{Regular Expressions, Behavioural Distances, Quantitative Equational Theories} 
\begin{document}

\maketitle

\begin{abstract}
Deterministic automata have been traditionally studied through the point of view of language equivalence, but another perspective is given by the canonical notion of \emph{shortest-distinguishing-word} distance quantifying the of states. Intuitively, the longer the word needed to observe a difference between two states, then the closer their behaviour is. In this paper, we give a sound and complete axiomatisation of \emph{shortest-distinguishing-word} distance between regular languages. Our axiomatisation relies on a recently developed quantitative analogue of equational logic, allowing to manipulate rational-indexed judgements of the form $e \equiv_\e f$ meaning \emph{term $e$ is approximately equivalent to term $f$ within the error margin of $\e$}. The technical core of the paper is dedicated to the completeness argument that draws techniques from order theory and Banach spaces to simplify the calculation of the behavioural distance to the point it can be then mimicked by axiomatic reasoning.	
\end{abstract}

\section{Introduction}\label{sec: introduction}
Transition systems have been widely employed to model computational phenomena. In theoretical computer science, it is customary to model computations as transition systems and subsequently reason about their equivalence or similarity.
 Classical examples include checking language equivalence of deterministic finite automata using Hopcroft and Karp's algorithm~\cite{Hopcroft:1971:Linear} or constructing bisimulations between labelled transition systems~\cite{Park:1981:Concurrency}. Throughout the years, especially in the concurrency theory community, researchers have studied a plethora of different notions of behavioural equivalences and preorders one could impose on a transition system~\cite{Glabbeek:1990:Linear}. However, in many practical applications, especially when dealing with probabilistic or quantitative transition systems, asking about such classical notions of equivalence (or similarity) could be too strict, and it might be more reasonable to ask quantitative questions about how far apart the behaviour of the two states is.

A growing line of work on \emph{behavioural distances}~\cite{Breugel:2017:Probabilistic,Baldan:2018:Coalgebraic,Breugel:2001:Towards,Breugel:2006:Approximating,Desharnais:2014:Metrics} answers this problem by equipping state-spaces of transition systems with (pseudo)metric structures quantifying the dissimilarity of states. In such a setting, states at distance zero are not necessarily the same, but rather equivalent with respect to some classical notion of behavioural equivalence. In a nutshell, equipping transition systems with such a notion of distance crucially relies on the possibility of \emph{lifting} the distance between the states to the distance on the observable behaviour of the transition system. Behavioural distances were originally studied in the context of probabilistic transition systems~\cite{Giacalone:1990:Algebraic,Breugel:2001:Towards}, where observable behaviour is in the form of probability distribution among possible transitions. In such a case, the necessary lifting of distances between states to distances between probability distributions of possible outcomes relies on the famous Kantorovich/Wasserstein liftings, studied traditionally in transportation theory~\cite{Villani:2009:Optimal}. In general, transition systems can be viewed more abstractly through a well-established category-theoretic framework of coalgebras for an endofunctor~\cite{Rutten:2000:Universal}. Recent work~\cite{Baldan:2018:Coalgebraic} generalised the Kantorovich/Wassertstein lifting to lifting endofunctors (modelling one-step behaviour of transition systems) from the category of sets and functions to the category of (pseudo)metric spaces and nonexpansive functions, thus allowing equipping a multitude of different kinds of transition systems with a sensible notion of behavioural distance.

Traditionally, besides looking at behavioural equivalence/similarity purely from the algorithmic point of view, one can look at those problems axiomatically, by describing behaviours of transition systems as expressions and by providing formal systems based on (in)equational logic for reasoning about equivalence/similarity of the transition systems described by the expressions. Classic examples include reasoning about language equivalence of Kleene's regular expressions representing deterministic finite automata using inference systems of Salomaa~\cite{Salomaa:1966:Two} or Kozen~\cite{Kozen:1994:Completeness}, or reasoning about bisimilarity of finite-state labelled transition systems through Milner's calculus of finite-state behaviours~\cite{Milner:1984:Complete}. 

In this paper, we are interested in a similar axiomatic point of view, but in the case of behavioural distances. Unfortunately, the classical (in)equational logic cannot be applied here, as it has no way of talking about approximate equivalence. Instead, we rely on the quantitative analogue of equational logic~\cite{Mardare:2016:Quantitative}, which deals with the statements of the form $e \equiv_{\e} f$, intuitively meaning \emph{term $e$ is within the distance of at most	 $\e \in \bQ$ from the term $f$}. While the existing work~\cite{Bacci:2018:Algebraic,Bacci:2018:Bisimilarity,Bacci:2018:TV} looked at quantitative axiomatisations of behavioural distance for probabilistic transition systems calculated through the Kantorovich/Wasserstein lifting, which can be thought of as a special case of the abstract coalgebraic framework relying on lifting endofunctors to the category of pseudometric spaces, the notions of behavioural distance for other kinds of transition systems have not been axiomatised before. It turns out that the approach to completeness used in~\cite{Bacci:2018:Bisimilarity} relies on properties which are not unique to distances obtained through the Kantorovich/Wasserstein lifting and can be employed to give complete axiomatisations of behavioural distances for other kinds of transition systems obtained through the coalgebraic framework~\cite{Baldan:2018:Coalgebraic}. In this paper, as a starting point, we look at one of the simplest instantiations of that abstract framework in the case of deterministic automata, yielding \emph{shortest-distinguishing-word} distance. To illustrate that notion of distance, consider the following three deterministic finite automata:

\begin{tikzpicture}[shorten >=1pt,node distance=2cm,on grid,auto,every node/.style={scale=1}]

  \node[initial,state,accepting]  (q_0)                      {$\mathbf{q_0}$};
  \path[->] (q_0) edge [loop above] node {a} ();
  \node [initial, state, accepting] (r_0) [right= 3 cm of q_0] {$\mathbf{r_0}$};
  \node [state,accepting] (r_1) [right= of r_0] {$\mathbf{r_1}$};
  \node [state] (r_2) [right= of r_1] {$\mathbf{r_2}$}; 
  \path[->] (r_0) edge node {a} (r_1);
  \path[->] (r_1) edge node {a} (r_2);
  \path[->] (r_2) edge [loop above] node {a} ();
  \node[initial, state] (s_0) [right= 3 cm of r_2] {$\mathbf{s_0}$};
  \path[->] (s_0) edge [loop above] node {a} ();
\end{tikzpicture}

Neither of the above automata are language equivalent. Their languages are respectively (from the left) $\{\epsilon, a, aa, aaa, \dots\}$, $\{\epsilon, a\}$ and $\emptyset$. However, one could argue that the behaviour of the middle automaton is closer to the one on the left rather than the one on the right. In particular, languages of the left and middle automaton agree on all of the words of length less than two, while the left and right one disagree on all words. 
One can make this idea precise, by providing $1$-bounded metric $d_{\mathcal{L}} : \pset (A^*) \times \pset (A^*) \to [0,1]$ on the set of all formal languages over some fixed alphabet $A$ given by the following formula, where $\lambda \in ]0,1[$ and $L, M \subseteq A^*$:
\begin{equation}\label{eq: shortest_distinguishing_word}
	d_{\mathcal{L}}(L, M) = \begin{cases}
\lambda^{|w|} & w \text{ is the shortest word that belongs to only one of } L \text{ and } M\\
0 & \text{if } L = M
\end{cases}
\end{equation}
If we set $\lambda = \frac{1}{2}$, then $d_{\mathcal{L}}(\{\epsilon, a, aa, aaa, \dots\}, \{\epsilon, a\}) = \frac{1}{4}$ and $d_{\mathcal{L}}(\{\epsilon,a,aa,aaa, \dots\}, \emptyset)=1$, which allows to formally state that the behaviour of the middle automaton is a better approximation of the left one, rather than the right one. Observe, that we excluded $\lambda=0$ and $\lambda=1$, as in both cases $d_{\mathcal{L}}$ would become a pseudometric setting all languages to be at distance zero or one. Automata in the example above correspond to the regular expressions $a^*$, $a + 1$ and $0$ respectively. In order to determine the distance between arbitrary regular expressions $e$ and $f$ one would have to construct corresponding deterministic finite automata and calculate (or approximate) the distance between their languages. Instead, as a main contribution of this paper, we present a sound and complete quantitative inference system for reasoning about the shortest-distinguishing-word distance of languages denoted by regular expressions in question. Formally speaking, if $\llbracket - \rrbracket : \Exp \to \pset (A^*)$ is a function taking regular expressions to their languages, then our inference system satisfies the following:
$$
	\vdash e \equiv_\e f \iff d_{\mathcal{L}}(\llbracket e \rrbracket, \llbracket f \rrbracket) \leq \e
$$ 
Although much of our development is grounded in category theory and coalgebra, we spell out all the definitions and results concretely, without the need for specialised language. We organise the paper as follows:
\begin{itemize}[topsep=1pt,itemsep=0pt]
	\item In \Cref{sec: preliminaries} we review basic definitions from automata theory and recall the semantics of regular expressions through Brzozowski derivatives~\cite{Brzozowski:1964:Expressions}. Then, in order to talk about distances, we state basic definitions and properties surrounding (pseudo)metric spaces.
	\item In \Cref{sec: behavioural_distance} we instantiate the framework of coalgebraic behavioural metrics~\cite{Baldan:2018:Coalgebraic} to the concrete case of deterministic automata. We recall the abstract results from~\cite{Baldan:2018:Coalgebraic} in simple automata-theoretic terms.
	\item In \Cref{sec: quantitative_equational_theories} we start by recalling the definitions surrounding the quantitative equational theories~\cite{Mardare:2016:Quantitative} from the literature. We then present the axioms of our inference system for the shortest-distinguishing-word distance of regular expressions, give soundness result and provide a discussion about the axioms. The interesting insight is that when relying on quantitative equational theories which contain an infinitary rule capturing the notion of convergence, there is no need for any fixpoint introduction rule. We illustrate this by axiomatically deriving Salomaa's fixpoint rule for regular expressions~\cite{Salomaa:1966:Two}.
	\item The key result of our paper is contained in \Cref{sec: Completeness}, where we prove completeness of our inference system. The heart of the argument relies on showing that the behavioural distance of regular expressions can be approximated from above using Kleene's fixpoint theorem, which can be then mimicked through the means of axiomatic reasoning. This part of the paper makes heavy use of the order-theoretic and Banach space structures carried by the sets of pseudometrics over a given set.
	\item We conclude in \Cref{sec: conclusion}, review related literature, and sketch directions for future work.
\end{itemize}
 
\section{Preliminaries}\label{sec: preliminaries}
We start by recalling basic definitions surrounding deterministic automata, regular expressions and (pseudo)metric spaces from the literature.

\subparagraph*{Deterministic automata.} A deterministic automaton $\mathcal{M}$ with inputs in a finite alphabet $A$ is a pair $(M, \langle o_M , t_M \rangle)$ consisting of a set of states $M$ and a pair of functions $\langle o_M, t_M \rangle$, where $o_M : M \to \{0,1\}$ is the \emph{output} function which determines whether a state $m$ is final ($o_M(m)=1$) or not ($o_M(m)=0$), and $t :M \to M^A$ is the \emph{transition} function, which, given an input letter $a$ determines the next state. If the set $M$ of states is finite, then we call an automaton $\mathcal{M}$ a deterministic finite automaton (\textsf{DFA}). We will frequently write $m_a$ to denote $t_M(m)(a)$ and refer to $m_a$ as the derivative of $m$ for the input $a$. Definition of derivatives can be inductively extended to words $w \in A^{\ast}$, by setting $m_\e = m$ and $m_{aw'} = (m_a)_{w'}$. Note that our definition of deterministic automaton slightly differs from the most common one in the literature, by not explicitly including the initial state. Instead of talking about the language of the automaton, we will talk about the languages of particular states of the automaton. Given a state $m \in M$, we write $L_{\mathcal{M}} (m) \subseteq {A^{\ast}}$ for its language, which is formally defined by $L_{\mathcal{M}}(m) = \{w \in A^\ast \mid o(m_w) = 1\}$. 
Given two deterministic automata $(M, \langle o_M, t_M \rangle)$ and $(N, \langle o_N, t_N \rangle)$, a function $h : M \to N$ is a homomorphism if it preserves outputs and input derivatives, that is $o_N(h(m))=o_M(m)$ and $h(m)_a = h(m_a)$. The set of all languages $\pset(A^{\ast})$ over an alphabet $A$ can be made into a deterministic automaton $(\pset(A^\ast), \langle o_L, t_L\rangle)$, where for $l \in \pset (A^{\ast})$ the output function is given by $o_L(l)=[\epsilon \in l]$ and for all $a \in A$ the input derivative is defined to be $l_a = \{w \mid aw \in l\}$. This automaton is \emph{final}, that is for any other automaton $\mathcal{M} = (M, \langle o_M, t_M \rangle)$ there exists a unique homomorphism from $M$ to $\pset(A^{\ast})$, which is precisely given by the map $L_{\mathcal{M}} : M \to \pset(A^\ast)$ taking each state $m \in S$ to its language. Given a set of states $M' \subseteq M$, we write $\gen{M'}{\mathcal{M}} \subseteq M$ for the smallest set of states reachable from $M'$ through the transition function of the automaton $\mathcal{M}$. Clearly, $(\gen{M'}{\mathcal{M}}, \langle o_M, t_M\rangle)$ is a deterministic automaton.
We will abuse the notation and write $\gen{M'}{\mathcal{M}}$ for $(\gen{M'}{\mathcal{M}}, \langle o_M, t_M\rangle)$.
The canonical inclusion map $\iota : \gen{M'}{\mathcal{M}} \hookrightarrow M$ given by $\iota(m)=m$ for all $m \in \gen{M'}{\mathcal{M}}$ is a homomorphism from $\gen{M'}{\mathcal{M}}$ to $\mathcal{M}$. In the case of singleton and two-element sets of states, we will simplify the notation and write $\gen{m}{\mathcal{M}}$ and $\gen{m, m'}{\mathcal{M}}$.

\subparagraph*{Regular expressions. } We let $e, f$ range over \emph{regular expressions over $A$} generated by the following grammar:
$$e , f \in \Exp ::= 0 \mid 1 \mid a \in A \mid e + f \mid e \seq f \mid e^{\ast}$$
The standard interpretation of regular expressions $\llbracket - \rrbracket : \Exp \to \pset(A^\ast)$ is inductively defined by the following:
$$\llbracket 0 \rrbracket = \emptyset \quad \llbracket 1 \rrbracket = \{\epsilon\} \quad \llbracket a \rrbracket = \{ a \} \quad \llbracket e + f \rrbracket = \llbracket e \rrbracket \cup \llbracket f \rrbracket \quad \llbracket e \seq f \rrbracket = \llbracket e \rrbracket \diamond \llbracket f \rrbracket \quad \llbracket e^\ast \rrbracket = \llbracket e \rrbracket^\ast $$
We write $\epsilon$ for the empty word. Given $L, M \subseteq A^\ast$, we define $L \diamond M = \{lm \mid l \in L, m \in M\}$, where mere juxtaposition denotes concatenation of words. $L^\ast$ denotes the \emph{asterate} of the language $L$ defined as $L^\ast = \bigcup_{i \in \bN} L^i$ with $L^0 = \{\epsilon\}$ and $L^{n + 1} = L \diamond L^{n}$.

\subparagraph*{Brzozowski derivatives. } The famous Kleene's theorem states that the formal languages accepted by \textsf{DFA} are in one-to-one correspondence with formal languages definable by regular expressions. One direction of this theorem involves constructing a \textsf{DFA} for an arbitrary regular expression. The most common way is via Thompson construction, $\epsilon$-transition removal and determinisation. Instead, we recall a direct construction due to Brzozowski~\cite{Brzozowski:1964:Expressions}, in which the set $\Exp$ of regular expressions is equipped with a structure of deterministic automaton $\mathcal{R} = (\Exp, \langle o_{\mathcal{R}}, t_{\mathcal{R}} \rangle)$ through so-called Brzozowski derivatives~\cite{Brzozowski:1964:Expressions}. The output derivative $o_{\mathcal{R}} : \Exp \to \{0, 1\}$ is defined inductively by the following for $a \in A$ and $e,f \in \Exp$:
\begin{gather*}
    o_{\mathcal{R}}(0) = 0 \quad o_{\mathcal{R}}(1) = 1 \quad o_{\mathcal{R}}(a) = 0 \\ o_{\mathcal{R}}(e + f) = o_{\mathcal{R}}(e) \vee o_{\mathcal{R}}(f) \quad
    o_{\mathcal{R}}(e \seq f) = o_{\mathcal{R}}(e) \wedge o_{\mathcal{R}}(f) \quad o_{\mathcal{R}}(e^{\ast}) = 1
\end{gather*}
Similarly, the transition derivative $t_{\mathcal{R}} \in \Exp \to A \to \Exp$ denoted $t_{\mathcal{R}} (e)(a) = (e)_a$ is defined by the following:
\begin{gather*}
    (0)_a = 0 \quad (1)_a = 0 \quad (a')_a = \begin{cases}
        1 & a = a'\\ 0 & a \neq a'  \end{cases}\\
    (e + f)_a = (e)_a + (f)_a \quad
    (e \seq f)_a = (e_a) \seq f + o_{\mathcal{R}}(e) \seq f \quad (e^{\ast}) = (e)_a \seq e^{\ast}
\end{gather*}
The canonical language-assigning homomorphism from $\mathcal{R}$ to $\mathcal{L}$, happens to coincide with the semantics map $\sem{-}$ assigning a language to each regular expression.
\begin{lemma}[{\cite[Theorem~3.1.4]{Silva:2010:Kleene}}]
    For all $e \in \Exp$, $\sem{e} = L_{\mathcal{R}}(e)$
\end{lemma}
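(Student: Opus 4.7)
The cleanest route is to invoke the finality of the automaton $(\pset(A^{\ast}), \langle o_L, t_L \rangle)$, which was recalled just before the statement: since $L_{\mathcal{R}}$ is defined as the unique automaton homomorphism from $\mathcal{R}$ to the final automaton, it suffices to show that the semantic map $\sem{-}$ is also a homomorphism from $\mathcal{R}$ to $(\pset(A^{\ast}), \langle o_L, t_L \rangle)$. By uniqueness, the two maps must then coincide pointwise.

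Concretely, I would verify the two homomorphism conditions by structural induction on $e \in \Exp$. The first is output preservation, $o_L(\sem{e}) = o_{\mathcal{R}}(e)$, which unfolds to $[\epsilon \in \sem{e}] = o_{\mathcal{R}}(e)$. The base cases $0, 1, a$ are immediate from the definitions of $\sem{-}$ and $o_{\mathcal{R}}$; the cases $e + f$ and $e \seq f$ reduce to easy facts about union and concatenation of languages ($\epsilon \in L \cup M$ iff $\epsilon \in L$ or $\epsilon \in M$; $\epsilon \in L \diamond M$ iff $\epsilon \in L$ and $\epsilon \in M$); the case $e^{\ast}$ follows from $\epsilon \in L^{\ast}$ always, matching $o_{\mathcal{R}}(e^{\ast}) = 1$.

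The second condition is transition preservation, $t_L(\sem{e})(a) = \sem{(e)_a}$, i.e.\ $\{w \mid aw \in \sem{e}\} = \sem{(e)_a}$, again by induction on $e$. The base cases are direct. For $e + f$ the claim splits along the union. For $e \seq f$ one decomposes a word $aw \in \sem{e} \diamond \sem{f}$ as $aw = u v$ with $u \in \sem{e}$, $v \in \sem{f}$, splitting into the subcase where $u = \epsilon$ (contributing $aw \in \sem{f}$, handled by the nullability bit $o_{\mathcal{R}}(e)$) and the subcase where $u = a u'$ with $a u' \in \sem{e}$ (contributing to $\sem{(e)_a \seq f}$ via the inductive hypothesis). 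For $e^{\ast}$, one rewrites $\sem{e^{\ast}} = \{\epsilon\} \cup \sem{e} \diamond \sem{e^{\ast}}$ and reduces to the concatenation case together with the inductive hypothesis for $e$.

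The only step that demands care is the $e \seq f$ clause of the transition case, because it is where the nullability bit of $e$ is used to glue two alternatives into a single expression; this is the obstacle in the sense that it relies on inspecting whether $\epsilon \in \sem{e}$ (already covered by the output part of the induction) and matching it precisely against Brzozowski's definition. Once this case is pushed through, the $e^{\ast}$ case falls out by unfolding the star once, and the lemma follows by finality.
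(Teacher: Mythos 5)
The paper does not prove this lemma itself---it is imported by citation from Silva's thesis---and your argument (showing that $\sem{-}$ is an automaton homomorphism from $\mathcal{R}$ into the final automaton of languages and concluding by uniqueness of such homomorphisms) is exactly the standard proof given there, with the right inductive case analysis for outputs and derivatives. One small remark: your handling of the $e \seq f$ transition case correctly uses the clause $o_{\mathcal{R}}(e) \seq (f)_a$, which is the intended Brzozowski derivative, whereas the paper's displayed definition reads $o_{\mathcal{R}}(e) \seq f$, evidently a typo that your proof implicitly corrects.
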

Instead of looking at infinite-state automaton defined on the state-space of all regular expressions, we can restrict ourselves to the subautomaton $\gen{e}{\mathcal{R}}$ of $\mathcal{R}$ while obtaining the semantics of $e$.
\begin{restatable}{lemma}{adequacy}\label{lem:adequacy}
    For all $e \in \Exp$, $\sem{e} = L_{\gen{e}{\mathcal{R}}}(e)$
\end{restatable}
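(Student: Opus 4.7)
The plan is to reduce the statement to the previously established equality $\sem{e} = L_{\mathcal{R}}(e)$ by exploiting the universal property of the final automaton $\pset(A^\ast)$. The key observation is that the language of a state in an automaton is a purely local notion: it only depends on the part of the state space reachable from that state. This is captured categorically by the fact that both $L_{\gen{e}{\mathcal{R}}}$ and $L_{\mathcal{R}} \circ \iota$ are homomorphisms from $\gen{e}{\mathcal{R}}$ to the final automaton $\pset(A^\ast)$, where $\iota : \gen{e}{\mathcal{R}} \hookrightarrow \mathcal{R}$ is the canonical inclusion.

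First I would recall that the inclusion $\iota : \gen{e}{\mathcal{R}} \hookrightarrow \mathcal{R}$ is a homomorphism of deterministic automata, as noted in the preliminaries. Composing with the language-assigning homomorphism $L_{\mathcal{R}} : \mathcal{R} \to \pset(A^\ast)$ yields a homomorphism $L_{\mathcal{R}} \circ \iota : \gen{e}{\mathcal{R}} \to \pset(A^\ast)$. On the other hand, $L_{\gen{e}{\mathcal{R}}}$ is itself a homomorphism from $\gen{e}{\mathcal{R}}$ into $\pset(A^\ast)$. By finality of $\pset(A^\ast)$, any two homomorphisms from $\gen{e}{\mathcal{R}}$ to $\pset(A^\ast)$ must coincide, so $L_{\gen{e}{\mathcal{R}}} = L_{\mathcal{R}} \circ \iota$.

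Evaluating both sides at $e$ and using $\iota(e) = e$, this gives $L_{\gen{e}{\mathcal{R}}}(e) = L_{\mathcal{R}}(e)$. Combining with the previous lemma $\sem{e} = L_{\mathcal{R}}(e)$ yields the desired equality $\sem{e} = L_{\gen{e}{\mathcal{R}}}(e)$.

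The argument is essentially routine once finality is invoked. The only subtlety — and the step I would treat carefully — is verifying that $\iota$ really is a homomorphism, which boils down to checking that $\gen{e}{\mathcal{R}}$ is closed under Brzozowski derivatives (so that $t_{\mathcal{R}}$ restricts correctly) and that $o_{\mathcal{R}}$ restricts to the same output function. Both are immediate from the definition of $\gen{e}{\mathcal{R}}$ as the smallest reachable subset of $\Exp$. No deeper obstacle is expected.
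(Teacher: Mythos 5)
Your argument is correct and is essentially identical to the paper's own proof: both compose the inclusion homomorphism $\iota : \gen{e}{\mathcal{R}} \hookrightarrow \mathcal{R}$ with $L_{\mathcal{R}}$, invoke uniqueness of homomorphisms into the final automaton to identify $L_{\mathcal{R}} \circ \iota$ with $L_{\gen{e}{\mathcal{R}}}$, and then evaluate at $e$ using $\sem{e} = L_{\mathcal{R}}(e)$. No gaps.
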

Unfortunately, for an arbitrary regular expression $e \in \Exp$, the automaton $\gen{e}{\mathcal{R}}$ is not guaranteed to have a finite set of states. However, simplifying the transition derivatives by removing duplicates in the expressions in the form $e_1 + \dots + e_n$, guarantees a finite number of reachable states from any expression. Formally speaking, let ${\acirel} \subseteq {\Exp \times \Exp}$ be the least congruence relation closed under $ {{(e + f) + g} {~\acirel~} {e + (f+g)}}$ (Associativity), ${e + f} {~\acirel~} {f + e}$ (Commutativity) and ${e} {~\acirel~} {e + e}$ (Idempotence) for all $e,f,g \in \Exp$.
We will write ${\aciq}$ for the quotient of $\Exp$ by the relation $\acirel$ and $[-]_{\acirel} : \Exp \to \aciq$ for the canonical map taking each expression $e \in \Exp$ into its equivalence class $[e]_{\acirel}$ modulo $\acirel$. Because of \cite[Proposition~5.8]{Rutten:2000:Universal}, $\aciq$ can be equipped with a structure of deterministic automaton $\mathcal{Q} = (\aciq, \langle o_{\mathcal{Q}}, t_{\mathcal{Q}} \rangle)$, where for all $e \in \Exp, a \in A$, $o_{\mathcal{Q}}([e]_{\acirel})=o_{\mathcal{R}}(e)$ and $([e]_{\acirel})_a = [e_a]_{\acirel}$, which makes the quotient map $[-]_{\acirel} : \Exp 
\to \aciq$ into an automaton homomorphism from the Brzozowski automaton $\mathcal{R}$ into $\mathcal{Q}$. This automaton enjoys the following property:
\begin{lemma}[{\cite[Theorem~4.3]{Brzozowski:1964:Expressions}}]\label{lem:locally_finite}
    For any $e \in \Exp$, the set $\gen{e}{\mathcal{Q}} \subseteq \aciq$ is finite.
\end{lemma}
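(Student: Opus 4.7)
The plan is to proceed by structural induction on $e \in \Exp$, showing in each case that the set of word-derivative classes $\{[(e)_w]_\acirel \mid w \in A^*\}$ is finite. This set coincides with $\gen{[e]_\acirel}{\mathcal{Q}}$ because the transition in $\mathcal{Q}$ is defined pointwise as $([e]_\acirel)_a = [e_a]_\acirel$, so every state reachable from $[e]_\acirel$ is a $w$-derivative class for some $w$.

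The base cases $e \in \{0, 1\} \cup A$ are immediate by inspection of the Brzozowski rules: each produces only a constant number of $\acirel$-classes. For the sum case $e = e_1 + e_2$, a simple induction on the length of $w$ using $(e_1 + e_2)_a = (e_1)_a + (e_2)_a$ yields $(e_1 + e_2)_w = (e_1)_w + (e_2)_w$, so the class $[(e_1 + e_2)_w]_\acirel$ is determined by the pair $\bigl([(e_1)_w]_\acirel,\, [(e_2)_w]_\acirel\bigr)$, which takes only finitely many values by the induction hypothesis.

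For the concatenation case $e = e_1 \seq e_2$ and the Kleene star case $e = e_1^*$, the crucial step is a structural lemma asserting that each word-derivative is $\acirel$-equivalent to a finite sum whose summand-classes are drawn from a finite pool determined by the subexpressions. Concretely, I would prove by induction on $w$ that $(e_1 \seq e_2)_w$ is $\acirel$-equivalent to an expression of the form $(e_1)_w \seq e_2 + \sum_{v \in V} (e_2)_v$ for some set $V$ of suffixes of $w$, and that $(e_1^*)_w$ is $\acirel$-equivalent to an expression of the form $\sum_{v \in V'} (e_1)_v \seq e_1^*$ for some set $V'$ of word-derivatives of $e_1$. Since $\acirel$ is the least congruence closed under associativity, commutativity, and idempotence of $+$, each such formal sum collapses modulo $\acirel$ to the underlying \emph{set} of summand-classes. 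Hence the $\acirel$-class of each word-derivative is encoded by a subset of $\gen{[e_1]_\acirel}{\mathcal{Q}}$ and $\gen{[e_2]_\acirel}{\mathcal{Q}}$ (paired, in the concatenation case, with the leading $(e_1)_w \seq e_2$ summand), and both sets are finite by the induction hypothesis.

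The main obstacle is establishing the structural characterisation for $\seq$ and $^*$ cleanly. Unlike the sum case, where the derivative commutes with the operator, each step of the Brzozowski rule for $\seq$ (respectively $^*$) injects an extra summand coming from the right-hand expression being ``started afresh'' whenever the preceding part is nullable, so after $|w|$ steps one obtains a nested sum that must be flattened. The argument that keeps the bookkeeping finite is the observation that, modulo $\acirel$, a formal sum is determined solely by the set of distinct summand-classes appearing in it; once phrased this way, the finiteness bound reduces to the elementary fact that a finite set has only finitely many subsets, together with the inductive hypothesis applied to the subexpressions.
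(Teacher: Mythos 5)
The paper never proves this lemma --- it only cites Brzozowski's Theorem~4.3 --- so the relevant benchmark is Brzozowski's own argument, and your structural induction with the ``sum determined modulo $\acirel$ by its finite set of summand classes'' bookkeeping is exactly that argument. The base cases and the sum case are fine. The concatenation case also goes through, but only after adjusting your claimed normal form: reading the paper's product rule with the intended second summand $o_{\mathcal{R}}(e) \seq f_a$, each differentiation step injects a summand of the literal shape $o_{\mathcal{R}}((e_1)_u) \seq (e_2)_a$, and $\acirel$ --- being the congruence generated only by associativity, commutativity and idempotence of $+$ --- cannot erase the $1 \seq (-)$ wrapper or discard the $0 \seq (-)$ junk. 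So the pool must be $\{[(e_1)_u \seq e_2]_{\acirel}\} \cup \{[c \seq (e_2)_v]_{\acirel} \mid c \in \{0,1\}\}$ rather than the bare $[(e_2)_v]_{\acirel}$; this pool is still finite by the induction hypothesis and closed under differentiation, so the case survives.

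The star case, however, has a genuine gap. There the right-hand factors of the injected summands are word-derivatives of $e_1^*$ itself: $(c \seq (e_1^*)_u)_a = 0 \seq (e_1^*)_u + o(c) \seq (e_1^*)_{ua}$, so your ``finite pool'' is defined in terms of the very set of classes being bounded, and the induction hypothesis (which concerns $e_1$, not $e_1^*$) does not make it finite. Concretely, over $A = \{a\}$ the literal syntactic derivatives of $a^*$ are $d_1 = 1 \seq a^*$, $d_2 = 0 \seq a^* + 1 \seq (1 \seq a^*)$, and in general $d_{n+1} \acirel 0 \seq a^* + 0 \seq d_1 + \cdots + 0 \seq d_{n-1} + 1 \seq d_n$; since every generating equation of $\acirel$ has $+$ at the root, two $\seq$-rooted terms are related only when their arguments are, so these $d_n$ are pairwise $\acirel$-inequivalent and $\gen{[a^*]_{\acirel}}{\mathcal{Q}}$ would be infinite. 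Your clean normal forms $(e_1)_w \seq e_2 + \sum_{v} (e_2)_v$ and $\sum_{v} (e_1)_v \seq e_1^*$ --- and the lemma itself --- hold only if the derivative is computed with the simplifications $1 \seq x \mapsto x$, $0 \seq x \mapsto 0$, $x + 0 \mapsto x$ applied eagerly (equivalently, if these identities are added to $\acirel$). That is Brzozowski's convention and surely the paper's intent, but it is not what the displayed transition derivatives together with the displayed definition of $\acirel$ literally give; your proof needs to state this convention explicitly, after which the finite-pool/powerset argument closes exactly as you describe.
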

Through an identical line of reasoning as before (\Cref{lem:adequacy}), we can show that:
\begin{lemma}
	For all $e \in \Exp$, $L_{\gen{[e]_{\acirel}}{\mathcal{Q}}}([e]_{\acirel}) = \sem{e}$
\end{lemma}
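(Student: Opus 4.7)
The plan is to imitate the proof of \Cref{lem:adequacy} by chaining together two homomorphisms whose target is the final automaton $\pset(A^\ast)$, and then invoking uniqueness of the language-assigning homomorphism. Concretely, I will exhibit a homomorphism from $\mathcal{R}$ that factors through $\gen{[e]_{\acirel}}{\mathcal{Q}}$, and use that the language of a state is preserved under homomorphisms into the final coalgebra.

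The key steps, in order: First, recall from the text that $[-]_{\acirel} : \mathcal{R} \to \mathcal{Q}$ is an automaton homomorphism (this is justified by \cite[Proposition~5.8]{Rutten:2000:Universal} as stated in the excerpt, and the definitions $o_{\mathcal{Q}}([e]_{\acirel}) = o_{\mathcal{R}}(e)$ and $([e]_{\acirel})_a = [e_a]_{\acirel}$). Second, observe that the canonical inclusion $\iota : \gen{[e]_{\acirel}}{\mathcal{Q}} \hookrightarrow \mathcal{Q}$ is a homomorphism, as already noted for generated subautomata in the preliminaries. Third, use finality of $(\pset(A^\ast), \langle o_L, t_L \rangle)$: since $L_{\mathcal{Q}} \circ [-]_{\acirel}$ is a homomorphism $\mathcal{R} \to \pset(A^\ast)$, by uniqueness it coincides with $L_{\mathcal{R}}$; similarly $L_{\mathcal{Q}} \circ \iota = L_{\gen{[e]_{\acirel}}{\mathcal{Q}}}$. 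Finally, chain the equalities
\[
L_{\gen{[e]_{\acirel}}{\mathcal{Q}}}([e]_{\acirel}) \;=\; L_{\mathcal{Q}}(\iota([e]_{\acirel})) \;=\; L_{\mathcal{Q}}([e]_{\acirel}) \;=\; L_{\mathcal{Q}}([-]_{\acirel}(e)) \;=\; L_{\mathcal{R}}(e) \;=\; \sem{e},
\]
where the last equality is the earlier \cite[Theorem~3.1.4]{Silva:2010:Kleene} lemma already cited in the excerpt.

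I do not expect any serious obstacle here, because the argument is a clean diagram chase that mirrors \Cref{lem:adequacy}: there, one composes $L_{\mathcal{R}}$ with the inclusion $\gen{e}{\mathcal{R}} \hookrightarrow \mathcal{R}$; here, one further precomposes with the quotient map $[-]_{\acirel}$. The only mild care required is to verify that $[-]_{\acirel}$ is indeed a homomorphism of deterministic automata, i.e.\ that the action of derivatives is well-defined on $\acirel$-classes, which is exactly what the excerpt invokes immediately before stating the lemma. Once that is in hand, the rest is a two-step application of finality.
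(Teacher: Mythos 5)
Your proof is correct and follows essentially the same route as the paper, which explicitly states that this lemma is obtained "through an identical line of reasoning" as \Cref{lem:adequacy}: compose the quotient homomorphism $[-]_{\acirel}$ and the inclusion of the generated subautomaton with the language-assigning maps, and use uniqueness of homomorphisms into the final automaton to collapse the chain. Nothing further is needed.
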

\subparagraph*{(Pseudo)metric spaces. } Let $\top \in \left] 0, \infty \right]$ be a fixed maximal element. A $\top$-bounded \emph{pseudometric} on a set $X$ (equivalently $\top$-\emph{pseudometric} or even just a \emph{pseudometric} if $\top$ is clear from the context) is a function $d : X \times X \to [0, \top]$ satisfying $d(x,x)=0$ (\emph{reflexivity}), $d(x,y)=d(y,x)$ (\emph{symmetry}) and $d(x,z) \leq d(x, y) + d(y,z)$ (\emph{triangle inequality}) for all $x,y,z \in X$. If additionally $d(x,y)=0$ implies $x=y$, $d$ is called a $\top$-\emph{metric}. A \emph{(pseudo)metric} space is a pair $(X,d)$ where $X$ is a set and $d$ is a (pseudo)metric on $X$. Given pseudometric spaces $(X, d_X)$ and $(Y, d_Y)$, we call a map $f : X \to Y$ \emph{nonexpansive}, if for all $x,x' \in X$, $d_Y(f(x),f(x')) \leq d_X(x,x')$ and an \emph{isometry} if $d_Y(f(x),f(x')) = d_X(x,x')$.
A simple example of a pseudometric is the discrete metric which can be defined on any set $X$ as $d_X(x,x)=0$ for all $x \in X$ and $d(x,y)_X=\top$ for $x,y \in X$ such that $x \neq y$.
The set $D_X$ of (pseudo)metrics over some fixed set $X$ can be equipped with a partial order structure given by the pointwise order, i.e. $d \sqsubseteq d' \iff \forall x, x' \in X . d(x,y) \leq d'(x,y)$. 
\begin{lemma}[{\cite[Lemma~3.2]{Baldan:2018:Coalgebraic}}]
    $(D_X, \sqsubseteq )$ is a complete lattice. The join of an arbitrary set of pseudometrics $D \subseteq D_X$ is taken pointwise, ie. $\left(\sup D \right)(x,y) = \sup \{ d(x,y) \mid d \in D\}$ for $x, y \in X$. The meet of $D$ is defined to be $\inf D = \sup \{ d \mid d \in D_X , \forall {d' \in D}, d \sqsubseteq d'\}$.
\end{lemma}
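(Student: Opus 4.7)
The plan is to first establish that the pointwise supremum formula produces a well-defined pseudometric that serves as the join, and then to derive meets by the standard order-theoretic principle that any poset with all joins and a bottom element automatically has all meets.

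For joins, fix $D \subseteq D_X$ and define $d^{\ast}(x,y) := \sup\{ d(x,y) \mid d \in D \}$; this supremum exists in $[0,\top]$ because $[0,\top]$ is a complete chain. I would verify the three pseudometric axioms directly. Reflexivity is immediate since $d^{\ast}(x,x) = \sup\{0\} = 0$, and symmetry is inherited componentwise from each $d \in D$. The only nontrivial point is the triangle inequality: for every $d \in D$ we have $d(x,z) \leq d(x,y) + d(y,z) \leq d^{\ast}(x,y) + d^{\ast}(y,z)$, and taking the supremum over $d \in D$ on the left yields $d^{\ast}(x,z) \leq d^{\ast}(x,y) + d^{\ast}(y,z)$. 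The bound $d \sqsubseteq d^{\ast}$ for every $d \in D$ is immediate from the definition of supremum, and any other upper bound $d' \in D_X$ satisfies $d(x,y) \leq d'(x,y)$ for all $d \in D$, hence $d^{\ast}(x,y) \leq d'(x,y)$ pointwise, making $d^{\ast}$ the least upper bound.

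For meets, let $L := \{ d \in D_X \mid \forall d' \in D,\ d \sqsubseteq d' \}$ be the set of lower bounds of $D$. This set is nonempty because the constantly zero map $d_0(x,y) := 0$ is a pseudometric and lies in $L$, so by the first part $m := \sup L$ is a well-defined element of $D_X$. To see that $m$ is a lower bound of $D$, fix any $d' \in D$; then for every $d \in L$ we have $d(x,y) \leq d'(x,y)$, so $m(x,y) = \sup_{d \in L} d(x,y) \leq d'(x,y)$, giving $m \sqsubseteq d'$. Conversely, any lower bound $m'$ of $D$ belongs to $L$ by definition, hence $m' \sqsubseteq \sup L = m$, which confirms that $m$ is the greatest lower bound.

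There is no real obstacle here: the argument is a routine transfer of the complete lattice structure of $([0,\top], \leq)$ to $D_X$ through pointwise operations, together with the elementary observation that the set of pseudometrics is closed under pointwise suprema. The only place where a small amount of care is required is in propagating the triangle inequality under suprema, which is handled by the two-step bound shown above.
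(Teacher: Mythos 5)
Your proof is correct. Note that the paper itself does not prove this lemma --- it is imported verbatim from Baldan et al.\ (Lemma~3.2 of the cited reference) --- so there is no in-paper argument to compare against; but your argument is the standard one and matches the formulas given in the statement exactly: pointwise suprema of pseudometrics are again pseudometrics (the only point needing care being the triangle inequality, which you handle correctly by bounding $d(x,z) \leq d^{\ast}(x,y) + d^{\ast}(y,z)$ for each $d \in D$ before passing to the supremum), and meets are then obtained as suprema of sets of lower bounds, which is forced because pointwise infima of pseudometrics need not satisfy the triangle inequality. Your treatment of the least-upper-bound and greatest-lower-bound properties is complete, and the empty-set edge cases ($\sup\emptyset$ being the zero pseudometric, $\inf\emptyset$ the discrete one) come out correctly from your definitions.
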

Crucially for our completeness proof, if we are dealing with descending chains, that is sequences $\{d_i\}_{i \in \bN}$, such that $d_i \sqsupseteq d_{i+1}$ for all $i \in \bN$, then we can also calculate infima in the pointwise way\footnote{\Cref{lem:chain_pointwise_inf} is one of the intermediate results used in the proof of~\cite[Lemma~5.6]{Bacci:2018:Bisimilarity} that was communicated to us by the authors of~\cite{Bacci:2018:Bisimilarity}. As this result was excluded in the mentioned paper, we incorporated it along with its proof for the sake of completeness. }.
\begin{restatable}{lemma}{pointwiseinf}\label{lem:chain_pointwise_inf}
    Let $\{d_i\}_{i \in \bN}$ be an infinite descending chain in the lattice $(D_X, \sqsubseteq)$ of pseudometrics over some fixed set $X$. Then $(\inf\{d_i \mid i \in \bN\})(x,y) = \inf \{d_i(x,y) \mid i \in \bN \}$ for any $x,y \in X$.
\end{restatable}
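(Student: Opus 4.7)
The plan is to define the pointwise function $d^{\ast}(x,y) := \inf\{ d_i(x,y) \mid i \in \bN \}$ and show two things: (i) $d^{\ast}$ is itself a pseudometric on $X$, and (ii) $d^{\ast}$ is the greatest lower bound of $\{ d_i \mid i \in \bN\}$ in the lattice $(D_X, \sqsubseteq)$. Together these imply $\inf \{ d_i \mid i \in \bN\} = d^{\ast}$, which is exactly the claimed pointwise formula.

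Reflexivity and symmetry of $d^{\ast}$ follow immediately from the fact that each $d_i$ satisfies the corresponding property, since $d^{\ast}(x,x) = \inf_i d_i(x,x) = \inf_i 0 = 0$ and $d^{\ast}(x,y) = \inf_i d_i(x,y) = \inf_i d_i(y,x) = d^{\ast}(y,x)$. The only nontrivial point — and the only place where the descending-chain hypothesis is genuinely used — is the triangle inequality. Given $x,y,z \in X$ and $\e > 0$, the definition of infimum supplies indices $i,j \in \bN$ with $d_i(x,y) \leq d^{\ast}(x,y) + \e/2$ and $d_j(y,z) \leq d^{\ast}(y,z) + \e/2$. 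Set $k = \max(i,j)$; the descending-chain property gives $d_k \sqsubseteq d_i$ and $d_k \sqsubseteq d_j$, so both bounds transport to $d_k$. Now apply the triangle inequality of the single pseudometric $d_k$ to get $d_k(x,z) \leq d_k(x,y) + d_k(y,z) \leq d^{\ast}(x,y) + d^{\ast}(y,z) + \e$. Since $d^{\ast}(x,z) \leq d_k(x,z)$ and $\e$ was arbitrary, $d^{\ast}(x,z) \leq d^{\ast}(x,y) + d^{\ast}(y,z)$, as required.

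For (ii), $d^{\ast}$ is a lower bound of the chain by construction, because $d^{\ast}(x,y) \leq d_i(x,y)$ for every $i$. Conversely, if $d' \in D_X$ satisfies $d' \sqsubseteq d_i$ for every $i$, then $d'(x,y) \leq d_i(x,y)$ for every $i$, hence $d'(x,y) \leq \inf_i d_i(x,y) = d^{\ast}(x,y)$, i.e.\ $d' \sqsubseteq d^{\ast}$. So $d^{\ast}$ is the greatest lower bound, which by uniqueness of infima in the lattice equals $\inf\{d_i \mid i \in \bN\}$.

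The main obstacle is the triangle inequality: in general, a pointwise infimum of pseudometrics can fail it (which is why the lemma preceding \Cref{lem:chain_pointwise_inf} defines the meet indirectly as a supremum of lower bounds rather than pointwise). The descending-chain hypothesis is exactly what allows the two ``witness'' indices $i,j$ to be replaced by a common index $k$ within the \emph{same} pseudometric, so that its built-in triangle inequality can be invoked directly on the right $\e$-approximations.
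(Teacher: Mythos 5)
Your proof is correct and follows essentially the same route as the paper's: reduce the claim to showing that the pointwise infimum is itself a pseudometric, with the triangle inequality as the only nontrivial point, handled by using the descending-chain hypothesis to replace the two witness indices by a single common one. The only cosmetic differences are that you run the triangle inequality via an $\e/2$-witness argument where the paper manipulates infima of sums directly, and that you make the greatest-lower-bound step explicit --- incidentally, your choice of $k=\max(i,j)$ is the correct one, the paper's $k=\min(i,j)$ at the corresponding step being a typo.
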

\begin{proof}
    It suffices to argue that $d(x,y) = \inf \{d_i(x,y) \mid i \in \bN \}$ is a pseudometric. For reflexivity, observe that $d(x,x) = \inf \{d_i(x,x) \mid i \in \bN \} = \inf \{ 0 \} = 0 $ for all $x \in X$. For symmetry, we have that $d(x,y) = \inf \{d_i(x,y) \mid i \in \bN \} =  \inf \{d_i(y,x) \mid i \in \bN \}=d(y,x)$ for any $x, y \in X$. The only difficult case is triangle inequality. First, let $i, j \in \bN$ and define $k = \min(i,j)$. Since $d_k \sqsubseteq d_i$ and $d_k \sqsubseteq d_j$, we have that $d_k(x,y) + d_k(y,z) \leq d_i(x,y) + d_j(y,z)$.
    Therefore $\inf \{d_l(x,y) + d_l(y,z) \mid l \in \bN\}$ is a lower bound of $d_i(x,y) + d_j(y,z)$ for any $i, j \in \bN$ and hence it is below the greatest lower bound, that is $\inf \{d_l(x,y) + d_l(y,z) \mid l \in \bN\} \leq \inf \{d_i(x,y) + d_j(y,z) \mid i,j \in \bN\}$. We can use that property to show the following $d(x,y) = \inf \{d_i(z,y) \mid i \in \bN\} \leq \inf \{d_i(x,y) + d_i(y,z) \mid i \in \bN\} \leq \{d_i(x,y) + d_j(y,z) \mid i,j \in \bN\} = \inf \{d_i(x,y) \mid i \in \bN\} + \inf \{d_j(y,z) \mid j \in \bN\} = d(x,y) + d(y,z)$, which completes the proof.
\end{proof}
\noindent
Additionally, the set of pseudometrics can be equipped with a norm. We write $\eR = [-\infty, \infty]$ for the set of extended reals. For any set $X$, the set of functions $\eR^{X \times X }$, which is a superset of $D_X$, can be seen as a Banach space~\cite{Rudin:1990:Functional} (complete normed vector space) by means of the sup-norm $\|d\| = \sup_{x,y \in X} |d(x,y)|$. This structure will implicitly underly some of the claims used as intermediate steps in the proof of completeness.

\section{Behavioural distance}\label{sec: behavioural_distance}
We now instantiate the abstract coalgebraic framework~\cite{Baldan:2018:Coalgebraic} to the case of deterministic automata relying on the lifting described in~\cite[Example~5.33]{Baldan:2018:Coalgebraic}. We concretise the generic results from that paper and spell them in simple automata-theoretic terms.
\subparagraph*{Lifting pseudometrics. } Let $\mathcal{M} = (M, \langle o_M, t_M\rangle)$ be a deterministic automaton. Its one-step observable behaviour (after applying the output and transition derivatives) can be seen as pairs of the type $\{0,1\} \times M^A$, where the first component determines whether the given state is accepting or not and the second one gives successor state for each letter from the input alphabet. Let's say we have two such observations $\langle o_1, f_1 \rangle, \langle o_2, f_2 \rangle \in \{0,1\} \times M^A$. If we had had some notion of a distance defined on the state-space of our automaton, or speaking more formally a $1$-pseudometric $d : M \times M \to [0,1]$, then we can \emph{lift} this notion of distance, to a distance on observations, given by the following:
\[
	d_{\{0,1\} \times M^A}(\langle o_1, f_1 \rangle, \langle o_2, f_2 \rangle) = \max\{d_{2}(o_1, o_2), \lambda \cdot \max_{a \in A} d(f_1(a), f_2(a))\} \qquad \lambda \in \left] 0,1\right[
\]   
The definition above involves $d_2$, the discrete metric on the set $\{0,1\}$. One can observe that $d_{\{0,1\} \times M^A}$ is again a $1$-pseudometric, but this time defined on the set $\{0,1\} \times M^A$ instead. We now move on to showing how one could use this lifting in order to equip a state-space of automaton $\mathcal{M}$ with a sensible notion of behavioural pseudometric.
\subparagraph*{Behavioural pseudometric. } 
If one gave us a $1$-metric $d : M \times M \to [0,1]$ on the state-space of the automaton, we could use our lifting to produce a new pseudometric $\Phi_{\mathcal{M}}(d) : M \times M \to [0,1]$ on the same set, which would calculate a distance between an arbitrary pair of states by first applying the output and transition derivatives to obtain a pair of observations and then by calculating the distance between them using the aforementioned lifting of pseudometric $d$ to the pseudometric defined on the set $\{0,1\} \times M^A$. Formally speaking, define a map $\Phi_{\mathcal{M}} : D_M \to D_M$ on the lattice of $1$-pseudometrics on $M$ given by the following:
$$
\Phi_{\mathcal{M}}(d)(m,m') = \max\{d_{2}(o_M(m), o_M(m')), \lambda \cdot \max_{a \in A} d(m_a, m'_a)\} \qquad \lambda \in \left] 0,1\right[
$$
The construction above only tells us how to construct new pseudometrics on the state-space of the automaton out of existing ones, but does not give one to start with. It turns out, that the map $\Phi_{\mathcal{M}}$ is a monotone mapping on the lattice of $1$-psuedometrics on the set $M$~\cite[Lemma~6.1]{Baldan:2018:Coalgebraic}. Because of that, one can use the Knaster-Tarski fixpoint theorem~\cite{Tarski:1955:Lattice} and construct its least fixed point, explicitly given by $d_{\mathcal{M}} = \inf \{d \mid d \in D_M \wedge \Phi_{\mathcal {M}}(d) \sqsubseteq d \}$. Pseudometrics, which are fixpoints of $\Phi_{\mathcal{M}}$ intuitively interact well with the automaton structure, as they satisfy the property that the distance between two states is the same as the distance between their observable behaviour calculated using the lifting. Taking the least such pseudometric satisfies several desirable properties~\cite{Baldan:2018:Coalgebraic} and thus we will call $d_{\mathcal{M}}$ a behavioural pseudometric on the automaton $\mathcal{M}$. First of all, preserving automaton transitions also preserves behavioural distances.
\begin{restatable}{proposition}{homomorphismsisometries}\label{prop:homomorphisms_isometries}
    Let $\mathcal{M} = (M, \langle o_M, t_M \rangle)$ and $\mathcal{N} = (N, \langle o_N, t_N \rangle)$ be deterministic automata. If $h : M \to N$ is a homomorphism, then it is also an isometric mapping between pseudometric spaces $(M, d_{\mathcal{M}})$ and $(N, d_{\mathcal{N}})$.
\end{restatable}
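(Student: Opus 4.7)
The plan is to define the pullback pseudometric $d'(m,m') = d_{\mathcal{N}}(h(m),h(m'))$ on $M$, show that it is a fixed point of $\Phi_{\mathcal{M}}$, and then invoke uniqueness of the fixed point to conclude $d' = d_{\mathcal{M}}$, which is precisely the isometry claim.

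That $d'$ is a $1$-pseudometric is immediate, since reflexivity, symmetry and the triangle inequality all transfer pointwise from $d_{\mathcal{N}}$. To show $d'$ is fixed by $\Phi_{\mathcal{M}}$, I unfold definitions and use the homomorphism equations $o_N \circ h = o_M$ and $h(m_a) = h(m)_a$ to rewrite
\[
\Phi_{\mathcal{M}}(d')(m,m') = \Phi_{\mathcal{N}}(d_{\mathcal{N}})(h(m),h(m')) = d_{\mathcal{N}}(h(m),h(m')) = d'(m,m'),
\]
where the middle equality uses that $d_{\mathcal{N}}$ is itself a fixpoint of $\Phi_{\mathcal{N}}$.

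For the uniqueness step I appeal to the Banach space structure on $\overline{\mathbb{R}}^{M \times M}$ recalled in the preliminaries. The main technical obligation will be to check that $\Phi_{\mathcal{M}}$ is a $\lambda$-contraction in the sup-norm, which reduces to combining the pointwise inequality $|\max\{c,\lambda x\} - \max\{c,\lambda y\}| \leq \lambda\,|x-y|$ with nonexpansiveness of $\max$ over the finite alphabet $A$. Since the three pseudometric axioms pass to sup-norm limits, $D_M$ is closed in $\overline{\mathbb{R}}^{M \times M}$ and therefore itself a complete metric space, so Banach's fixed point theorem yields a unique fixpoint of $\Phi_{\mathcal{M}}$, which must coincide with the least fixpoint $d_{\mathcal{M}}$ provided by Knaster--Tarski. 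The contraction estimate is the only nontrivial piece; everything else is routine rewriting along the homomorphism equations.
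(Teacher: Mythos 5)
Your argument is correct, but it is genuinely different from what the paper does: the paper simply cites the abstract coalgebraic results of Baldan et al.\ (\cite[Theorem~5.23, Lemma~6.1]{Baldan:2018:Coalgebraic}) and gives no self-contained proof, whereas you give a direct argument. Your route --- pull back $d_{\mathcal{N}}$ along $h$, check via the homomorphism equations that the pullback is a fixpoint of $\Phi_{\mathcal{M}}$, and conclude by uniqueness of fixpoints --- is sound, and it is pleasing that the machinery you need (the $\lambda$-contraction estimate for $\Phi_{\mathcal{M}}$ in the sup-norm and the resulting uniqueness of its fixpoint) is exactly what the paper itself later establishes as \Cref{lem:operator_nonexpansive} and \Cref{lem: uniquefp} for the completeness proof; so your proof in effect front-loads that material and makes the proposition self-contained. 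Two small remarks. First, you do not actually need Banach's fixed point theorem or the completeness of $D_M$: existence of a fixpoint is already given by Knaster--Tarski, and uniqueness follows immediately from the contraction inequality $\|d-d'\| = \|\Phi_{\mathcal{M}}(d)-\Phi_{\mathcal{M}}(d')\| \leq \lambda\|d-d'\|$ applied to two fixpoints (their sup-distance is finite since both are $1$-bounded). Second, your contraction argument works pointwise and so applies to automata with infinite state spaces, which matters here because the proposition is stated for arbitrary deterministic automata (including the final automaton on $\pset(A^\ast)$); this is actually slightly cleaner than the paper's own proof of \Cref{lem: uniquefp}, which tacitly assumes the supremum $\|d'-d\|$ is attained at some pair of states.
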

If we look at the final automaton on the set of all formal languages over a fixed finite alphabet $A$, then one can easily verify that the behavioural distance given by the least fixpoint construction precisely corresponds to the \Cref{eq: shortest_distinguishing_word} defining the shortest-distinguishing-word distance we stated in \Cref{sec: introduction}. In general, states of an arbitrary deterministic automaton characterised by the behavioural pseudometric to be in distance zero are language equivalent. When we look at $d_{\mathcal{L}}$ defined on the states of the final automaton, whose state-space consists of formal languages, then the language equivalence corresponds to the equality of states. In other words $d_{\mathcal{L}}$ becomes a metric space.
\begin{restatable}{lemma}{behaviouralpseudometric}
    Let $\mathcal{M} = (M, \langle o_M, t_M\rangle)$ be an arbitrary deterministic automaton and let $\mathcal{L} = (\pset (A^{\ast}), \langle o_L , t_L \rangle)$ be a deterministic automaton structure on the set of all languages over an alphabet $A$.
    \begin{enumerate}[topsep=1pt,itemsep=0pt]
        \item $(\pset (A^\ast), d_{\mathcal{L}})$ is a metric space.
        \item For any $m, m' \in M$, $d_\mathcal{M}(m,m') = 0 \iff L_{\mathcal{M}}(m) = L_{\mathcal{M}}(m')$.
    \end{enumerate} 
\end{restatable}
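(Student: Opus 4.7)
The plan is to first establish Part 1 by showing that the closed-form shortest-distinguishing-word distance given by \Cref{eq: shortest_distinguishing_word}, which I denote $d^\star$, coincides with the least fixed point $d_{\mathcal L}$ of $\Phi_{\mathcal L}$. Part 2 then follows readily from Part 1 together with \Cref{prop:homomorphisms_isometries}.

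For Part 1, I would first verify that $d^\star$ is a pseudometric and that it satisfies $\Phi_{\mathcal L}(d^\star) = d^\star$. Reflexivity and symmetry are immediate from the definition, while for the triangle inequality one observes that a shortest distinguisher $w$ of $L, M$ of length $n$ must lie in the symmetric difference of either $L, K$ or $K, M$ for any third language $K$, so at least one of $d^\star(L,K), d^\star(K,M)$ is bounded below by $\lambda^{n}$. For the fixed-point equation, the case analysis splits on whether $L = M$ (both sides zero), whether $\varepsilon$ distinguishes them (both sides equal $1$), or otherwise peeling off the first letter of a shortest distinguisher $w = aw'$ of length $n \geq 1$: since any strictly shorter distinguisher of some pair $L_a, M_a$ would give a strictly shorter distinguisher of $L, M$, one gets $\max_{a \in A} d^\star(L_a, M_a) = \lambda^{n-1}$, and the equation collapses to $\lambda^n = \lambda \cdot \lambda^{n-1}$.

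Next, I would argue that $d^\star = d_{\mathcal L}$. Since $d^\star$ is a fixed point, it satisfies $\Phi_{\mathcal L}(d^\star) \sqsubseteq d^\star$, so by the explicit description of $d_{\mathcal L}$ as $\inf \{d \mid \Phi_{\mathcal L}(d) \sqsubseteq d\}$ we immediately get $d_{\mathcal L} \sqsubseteq d^\star$. For the reverse inequality I would induct on the length $n$ of the shortest distinguisher of $L$ from $M$, exploiting that $d_{\mathcal L}$ is itself a fixed point of $\Phi_{\mathcal L}$: the base case $n = 0$ forces $d_{\mathcal L}(L, M) \geq 1 = \lambda^0$ via the $d_2$ component, and in the inductive step, taking $a$ to be the leading letter of a shortest distinguisher of $L, M$, the derivatives $L_a, M_a$ have a shortest distinguisher of length $n-1$, so $d_{\mathcal L}(L, M) \geq \lambda \cdot d_{\mathcal L}(L_a, M_a) \geq \lambda \cdot \lambda^{n-1}$. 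Since $\lambda \in \left]0,1\right[$, the value $\lambda^n$ is strictly positive whenever $L \neq M$, showing that $d_{\mathcal L}$ is a metric and not merely a pseudometric.

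Part 2 is then immediate from \Cref{prop:homomorphisms_isometries}: the language-assigning map $L_{\mathcal M} : M \to \pset(A^\ast)$ is an automaton homomorphism, hence an isometry, so $d_{\mathcal M}(m, m') = d_{\mathcal L}(L_{\mathcal M}(m), L_{\mathcal M}(m'))$. Applying the metric property established in Part 1 to the right-hand side yields the desired equivalence. The main technical burden lies in the minimality argument pinning down $d_{\mathcal L} = d^\star$; the subtlety is ensuring that a shorter-than-expected distinguisher of some pair of letter-derivatives cannot arise without yielding a shorter distinguisher of the original languages, which is exactly what forces the geometric decay $\lambda^n$ rather than anything smaller.
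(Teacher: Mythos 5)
Your proposal is correct, but it takes a genuinely different route from the paper for Part~1. The paper disposes of Part~1 entirely by citation: it invokes \cite[Example~5.33]{Baldan:2018:Coalgebraic}, \cite[Lemma~5.24]{Baldan:2018:Coalgebraic} and \cite[Theorem~6.10]{Baldan:2018:Coalgebraic}, which in the abstract coalgebraic setting guarantee that the behavioural pseudometric on the final coalgebra is a metric. You instead give a self-contained, elementary argument: you verify that the closed-form distance $d^\star$ of \Cref{eq: shortest_distinguishing_word} is a pseudometric and a fixed point of $\Phi_{\mathcal{L}}$, and you pin down $d^\star = d_{\mathcal{L}}$ by combining the Knaster--Tarski characterisation $d_{\mathcal{L}} = \inf\{d \mid \Phi_{\mathcal{L}}(d) \sqsubseteq d\}$ (giving $d_{\mathcal{L}} \sqsubseteq d^\star$) with an induction on the length of the shortest distinguisher (giving the lower bound $d_{\mathcal{L}}(L,M) \geq \lambda^n$). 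The key observations --- that a distinguisher of $L,M$ must distinguish one of the pairs $(L,K)$ or $(K,M)$, and that a shorter distinguisher of some derivative pair $L_a, M_a$ would yield a shorter distinguisher of $L,M$ --- are exactly what is needed, and your minimality argument correctly avoids appealing to uniqueness of fixed points of $\Phi_{\mathcal{L}}$ (which the paper only establishes in a form that relies on the sup-norm being attained, i.e.\ effectively on finiteness of the state space, and hence would not directly apply to the infinite final automaton). What your approach buys is a proof that the least fixed point really is the shortest-distinguishing-word distance of \Cref{eq: shortest_distinguishing_word}, a fact the paper asserts in Section~3 without proof; what it costs is length, compared to the paper's one-line citation. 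For Part~2 your argument coincides with the paper's use of \Cref{prop:homomorphisms_isometries}, except that you derive the converse direction directly from the isometry and reflexivity of $d_{\mathcal{L}}$ rather than citing \cite[Lemma~6.6]{Baldan:2018:Coalgebraic}, which is a harmless simplification.
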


\section{Quantitative Axiomatisation}\label{sec: quantitative_equational_theories}
In order to provide a quantitative inference system for reasoning about the behavioural distance of languages denoted by regular expressions, we first recall the definition of quantitative equational theories from the existing literature~\cite{Mardare:2016:Quantitative,Bacci:2018:Bisimilarity} following the notational conventions from~\cite{Bacci:2018:Bisimilarity}. We then present our axiomatisation and demonstrate its soundness. The interesting thing about our axiomatisation is the lack of any fixpoint introduction rule. We show that in the case of quantitative analogue of equational logic~\cite{Mardare:2016:Quantitative} containing the infinitary rule capturing the notion of convergence, we can use our axioms to derive Salomaa's fixpoint rule from his axiomatisation of language equivalence of regular expressions~\cite{Salomaa:1966:Two}.

\subparagraph*{Quantitative equational theories.} 
Let $\Sigma$ be an algebraic signature (in the sense of universal algebra~\cite{Burris:1981:Course}) consisting of operation symbols $f_n \in \Sigma$ of arity $n \in \bN$. If we write $X$ for the countable set of \emph{metavariables}, then $\TT[\Sigma]{X}$ denotes a set of freely generated terms over $X$ built from the signature $\Sigma$. As a notational convention, we will use letters $t,s,u, \ldots \in \TT[\Sigma]{X}$ to denote terms. 
By a \emph{substitution} we mean a function of the type $\sigma\colon X \to \TT[\Sigma]{X}$ allowing to replace metavariables with terms. Each substitution can be inductively extended to terms in a unique way by setting $\sigma(f(t_1, \dots, t_n)) = f(\sigma(t_1), \dots, \sigma(t_n))$ for each operation symbol $f_n \in \Sigma$ from the signature. We will write $\Sub[\Sigma]$ for the set of all substitutions. Given two terms $t,s \in \TT[\Sigma]{X}$ and a nonnegative rational number $\e \in \bQ$ denoting the distance between the terms, we call $t \equiv_\e s$ a \emph{quantitative equation (of type $\Sigma$)}. Notation-wise, we will write $\E[\Sigma]$ to denote the set of all quantitative equations (of type $\Sigma$) and we will use the capital Greek letters $\Gamma, \Theta, \ldots \subseteq \E[\Sigma]$ to denote the subsets of $\E[\Sigma]$. By a \emph{deducibility relation} we mean a binary relation denoted  ${\vdash} \subseteq \pset ({\E[\Sigma]}) \times \E[\Sigma]$.  Similarly, to the classical equational logic, we will use the following notational shorthands $\Gamma \vdash t \equiv_\e s \iff (\Gamma,t \equiv_\e s) \in {\vdash}$ and $\vdash t \equiv_\e s \iff \emptyset \vdash t \equiv_\e s$. Furthermore, following the usual notational conventions, we will write $\Gamma \vdash \Theta$ as a shorthand for the situation when $\Gamma \vdash t \equiv_\e s$ holds for all $t \equiv_\e s \in \Theta$. To call $\vdash$ a \emph{quantitative deduction 
system (of type $\Sigma$)} it needs to satisfy the following rules of inference: 
\begin{align*} 
(\Refl) \quad 
& \vdash t \equiv_0 t \,, \\
(\Symm) \quad 
& \{t\equiv_\e s\} \vdash s\equiv_\e t \,, \\
(\Triang) \quad 
& \{t \equiv_\e u, u \equiv_{\e'} s \} \vdash t \equiv_{\e+\e'} s \,, \\
(\Max) \quad 
& \{t\equiv_\e s\} \vdash t\equiv_{\e+\e'}s \,, \text{ for all $\e'>0$} \,, \\ 
(\Cont) \quad 
& \{t\equiv_{\e'}s\mid \e'>\e\} \vdash t\equiv_\e s \,, \\
(\Nexp) \quad
& \{t_1\equiv_\e s_1,\ldots,t_n \equiv_\e s_n\} \vdash f(t_1,\dots, t_n) \equiv_\e f(s_1,\dots, s_n) \,, 
\text{ for all $f_n \in \Sigma$} \,, \\
(\Subst) \quad
& \text{If $\Gamma \vdash t \equiv_\e s$, then $\sigma(\Gamma) \vdash \sigma(t) \equiv_\e \sigma(s)$, 
for all $\sigma \in \Sub[\Sigma]$} \,, \\
(\Cut) \quad 
& \text{If $\Gamma \vdash \Theta$ and $\Theta \vdash t \equiv_\e s$, then $\Gamma \vdash t \equiv_\e s$} \,, \\
(\Assum) \quad
& \text{If $t \equiv_\e s \in\Gamma$, then $\Gamma \vdash t \equiv_\e s$} \,.
\end{align*}
where $\sigma(\Gamma) = \set{\sigma(t) \equiv_\e \sigma(s)}{ t \equiv_\e s \in \Gamma}$.
Finally, by a \emph{quantitative equational theory} we mean a set $\U$ of universally quantified \emph{quantitative inferences} 
$
\{t_1 \equiv_{\e_1} s_1, \dots, t_n \equiv_{\e_n} s_n\} \vdash t \equiv_\e s \,,
$ with \emph{finitely many premises}, closed under $\vdash$-derivability.

\subparagraph*{Quantitative algebras.} Quantitative equational theories lie on the syntactic part of the picture. On the semantic side, we have their models called \emph{quantitative algebras}, defined as follows. 

\begin{definition}[{\cite[Definition~3.1]{Mardare:2016:Quantitative}}]
    A quantitative algebra is a tuple $\A = (A, \Sigma^{\A}, d^{\A})$, such that $(A, \Sigma^{\A})$ is an algebra for the signature $\Sigma$ and $(A, d^{\A})$ is an $\infty$-pseudometric such that for all operation symbols $f_n \in \Sigma$, for all $1 \leq i \leq n$, $a_i, b_i \in A$, $d^{\A}(a_i,b_i)\leq \e$ implies $d^{\A}(f^{\A}(a_1, \dots, a_n), f^{\A}(b_1, \dots, b_n)) \leq \e$.
\end{definition}

Consider a quantitative algebra $\A = (A,\Sigma^\A,d^\A)$. Given an assignment  $\iota \colon X \to A$ of meta-variables from $X$ to elements of carrier $A$, one can inductively extend it to $\Sigma$-terms $t \in \TT[\Sigma]{X}$ in a unique way. We will abuse the notation and just write $\iota(t)$ for the interpretation of the term $t$ in quantitative algebra $\A$. We will say that $\A$ \emph{satisfies} the quantitative inference $\Gamma \vdash t \equiv_\e s$, written $\Gamma \models_\A t \equiv_\e s$, if for any assignment of the meta-variables $\iota \colon X \to A$ it is the case that for all $t' \equiv_{\e'} s' \in \Gamma$ we have that $d^\A(\iota(t'),\iota(s')) \leq \e'$ implies $d^\A(\iota(t),\iota(s)) \leq \e $. Finally, we say that a quantitative algebra $\A$ \emph{satisfies} (or is a \emph{model} of) the quantitative theory $\U[]$, 
if whenever $\Gamma \vdash t \equiv_\e s \in \U[]$, then $\Gamma \models_\A t \equiv_\e s$. 

\subparagraph*{Quantitative algebra of regular expressions.} From now on, let's focus on the signature $\Sigma = \{0_0, 1_0, +_2, \seq_2, {(-)^\ast}_1\} \cup \{a_0 \mid a \in A\}$, where $A$ is a finite alphabet. This signature consists of all operations of regular expressions. We can easily interpret all those operations in the set $\Exp$ of all regular expressions, using trivial interpretation functions eg. $+^{\B}(e,f) = e + f$, which interpret the operations by simply constructing the appropriate terms. Formally speaking, we can do this because the set $\Exp$ is the carrier of initial algebra~\cite{Burris:1981:Course} (free algebra over the empty set of generators) for the signature $\Sigma$. 

To make this algebra into a quantitative algebra, we first equip the set $\Exp$ with a $\infty$-pseudometric, given by $
d^{\B}(e,f) = d_\mathcal{L}(\llbracket e \rrbracket, \llbracket f \rrbracket)$ for all $e, f \in \Exp$. 
Recall that $d_{\mathcal{L}}$ used in the definition above is a behavioural pseudometric on the final deterministic automaton carried by the set $\pset (A^{\ast})$ of all formal languages over an alphabet $A$. In other words, we define the distance between arbitrary expressions $e$ and $f$ to be the distance between formal languages $\sem{e}$ and $\sem{f}$ calculated through the shortest-distinguishing-word metric. It turns out, that in such a situation all the interpretation functions of $\Sigma$-algebra structure on $\Exp$ are non-expansive with respect to to the pseudometric defined above. In other words, we have that: 

\begin{restatable}{lemma}{quantitativealgebra}\label{lem:quantitative_algebra}
    $\B = (\Exp, \Sigma^{\B}, d^{\B})$ is a quantitative algebra.
\end{restatable}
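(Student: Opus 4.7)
The plan is to verify the three conditions defining a quantitative algebra. That $(\Exp, \Sigma^{\B})$ is a $\Sigma$-algebra is immediate by construction, since each operation symbol is interpreted as the corresponding term constructor on $\Exp$. For the pseudometric axioms, I would observe that $d^{\B}$ is the pullback of $d_{\mathcal{L}}$ along the semantics map $\sem{-}\colon \Exp \to \pset(A^{\ast})$, so reflexivity, symmetry, the triangle inequality, and in fact $1$-boundedness (hence $\infty$-boundedness) are all inherited directly from $d_{\mathcal{L}}$, which was shown to be a $1$-pseudometric in \Cref{sec: behavioural_distance}.

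The substantive work lies in non-expansiveness of the operation interpretations. The nullary operations $0$, $1$, $a \in A$ pose no obligation. For the remaining operations, after unfolding the definitions of $d^{\B}$ and $\sem{-}$, it suffices to establish, for all $L, L', M, M' \subseteq A^{\ast}$, the three inequalities
\begin{align*}
d_{\mathcal{L}}(L \cup M,\; L' \cup M') &\leq \max\{d_{\mathcal{L}}(L, L'),\; d_{\mathcal{L}}(M, M')\},\\
d_{\mathcal{L}}(L \diamond M,\; L' \diamond M') &\leq \max\{d_{\mathcal{L}}(L, L'),\; d_{\mathcal{L}}(M, M')\},\\
d_{\mathcal{L}}(L^{\ast},\; L'^{\ast}) &\leq d_{\mathcal{L}}(L, L').
\end{align*}
The key auxiliary fact, immediate from \Cref{eq: shortest_distinguishing_word}, is the following characterization: for every $n \in \bN$, one has $d_{\mathcal{L}}(L,M) \leq \lambda^n$ if and only if $L$ and $M$ contain exactly the same words of length strictly less than $n$. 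Note that $d_{\mathcal{L}}$ only takes values in $\{0\} \cup \{\lambda^n \mid n \in \bN\}$, and since $\lambda < 1$ the same is true of the maximum on the right-hand side above.

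Given this characterization, each inequality reduces to a short combinatorial argument. Writing the right-hand side as $\lambda^n$ (the case of value $0$ forces equality of the languages involved and hence gives the inequality trivially), I need only show that the two languages appearing on the left agree on every word of length $< n$. The union case is immediate by pointwise Boolean reasoning. For concatenation, any factorization $w = uv$ with $|w| < n$ automatically satisfies $|u|, |v| < n$, so the hypothesised agreement of $(L, L')$ and of $(M, M')$ on short words transfers to agreement of the concatenations. The Kleene star case is the only one that requires a bit of care and is therefore the main (though modest) obstacle I would flag: if $w \in L^{\ast}$ with $|w| < n$, the subcase $w = \epsilon$ is clear, and otherwise one should take a factorization $w = u_1 \cdots u_k$ with each $u_i \in L$ and $k$ minimal, which forces every $u_i$ to be non-empty (an empty factor could be dropped, contradicting minimality). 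Then $|u_i| \leq |w| < n$ gives $u_i \in L'$ by the characterization, whence $w \in L'^{\ast}$; the symmetric inclusion closes the case and completes the verification.
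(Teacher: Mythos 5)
Your proposal is correct and follows essentially the same route as the paper: both reduce non-expansiveness to the observation that $d_{\mathcal{L}}(L,M)\leq\lambda^n$ exactly when $L$ and $M$ agree on all words of length strictly below $n$, and then transfer that agreement through union, concatenation (via the fact that both factors of a short word are short), and asterate. The only cosmetic difference is that you argue directly at the level of languages with an explicit minimal factorization in the star case, whereas the paper works on expressions and simply notes that every factor $u_i$ of a word of length $<n$ already has length $<n$ (so the minimality step, while harmless, is not needed).
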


\subparagraph*{Axiomatisation. } In order to talk about the quantitative algebra $\B$ of the behavioural distance of regular expressions in an axiomatic way, we introduce the quantitative equational theory \textsf{REG} (\Cref{fig:axioms}).
\begin{figure}[h]
	
{
\small
\begin{tabular}{l@{\quad}l}
\(
\begin{array}{ll}
&\textbf{Nondeterministic choice}\\
(\mathsf{SL1})\;\; 
& \vdash e + e \equiv_0 e \,, \\
(\mathsf{SL2})\;\; 
& \vdash e + f \equiv_0 f + e \,, \\
(\mathsf{SL3})\;\; 
& \vdash (e + f) + g\equiv_0 e + (f + g) \,, \\
(\mathsf{SL4})\;\; 
& \vdash e + 0\equiv_0 e \,, \\
(\mathsf{SL5})\;\; & \{ e \equiv_\e g , f \equiv_{\e'} h\} \\ & \quad\vdash e + f \equiv_{\max(\e, \e')} g + h \,, \\\\\\
&\textbf{Loops} \\
(\mathsf{Unroll})\;\; & \vdash e^\ast \equiv_0 e \seq e^{\ast} + 1 \,, \\
(\mathsf{Tight})\;\; & \vdash (e + 1)^\ast \equiv_0 e^\ast \,, \\[1.2ex]
\end{array}
\)&
\(
\begin{array}{ll}
&\textbf{Sequential composition}\\
	(\mathsf{1S})\;\; 
& \vdash 1 \seq e \equiv_0 e \,, \\
(\mathsf{S})\;\; 
& \vdash e \seq (f \seq g) \equiv_0 (e \seq f) \seq g \,, \\
(\mathsf{S1})\;\; 
& \vdash e \seq 1 \equiv_0 e \,, \\
(\mathsf{0S})\;\; 
& \vdash 0 \seq e \equiv_0 0 \,, \\
(\mathsf{S0})\;\; 
& \vdash e \seq 0 \equiv_0 0 \,, \\
(\mathsf{D1})\;\; 
& \vdash e \seq (f + g) \equiv_0 e \seq f + e \seq g \,, \\
(\mathsf{D2})\;\; 
& \vdash (e + f) \seq g \equiv_0 e \seq g + f \seq g \,, \\\\
&\textbf{Behavioural pseudometric}\\
(\Top)\;\; & \vdash e \equiv_1 f \,, \\
(\dPref)\;\; & \{ e \equiv_\e f \} \vdash a \seq e \equiv_{\e'} a \seq f \,, \text{for $\e'\geq \lambda \cdot \e$} \\[1.2ex]
\end{array}
\)
\end{tabular}
}	
\caption{Axioms of the quantitative equational theory \textsf{REG} for $e,f,g \in \Exp$ and $a \in A$.}
\label{fig:axioms}
\end{figure}
The first group of axioms capture properties of the nondeterministic choice operator $+$ \textsf{(SL1-SL5)}. The first four axioms \textsf{(SL1-SL4)} are the usual laws of semilattices with bottom element $0$. (\textsf{SL5}) is a quantitative axiom allowing one to reason about distances between sums of expressions in terms of distances between expressions being summed. Moreover, \textsf{(SL1-SL5)} are axioms of so-called \emph{Quantitative Semilattices with zero}, which have been shown to axiomatise the Hausdorff metric~\cite{Mardare:2016:Quantitative}. The sequencing axioms \textsf{(1S), (S1), (S)} state that the set $\Exp$ of regular expressions has the structure of a monoid (with neutral element $1$) with absorbent element $0$ \textsf{(0S), (S0)}. Additionally, \textsf{(D1-D2)} talk about interaction of the nondeterministic choice operator $+$ with sequential composition. The loop axioms \textsf{(Unroll)} and \textsf{(Tight)} are directly inherited from Salomaa's axiomatisation of language equivalence of regular expressions~\cite{Salomaa:1966:Two}. \textsf{(Unroll)} axiom associates loops with their intuitive behaviour of choosing, at each step, between successful termination and executing the loop body once. \textsf{(Tight)} states that the loop whose body might instantly terminate, causing the next loop iteration to be executed immediately is provably equivalent to a different loop, whose body does not contain immediate termination. 
The last remaining group are behavioural pseudometric axioms. \textsf{(Top)} states that any two expressions are at most in distance one from each other. Finally, \textsf{(\dPref)} captures the fact that prepending the same letter to arbitrary expressions shrinks the distance between them by the factor of $\lambda \in ]0,1[$ (used in the definition of $d^{\B}$). This axiom is adapted from the axiomatisation of discounted probabilistic bisimilarity distance~\cite{Bacci:2018:Bisimilarity}.

Through a simple induction on the length of derivation, one can verify that indeed $\B$ is a model of the quantitative theory \textsf{REG}. 
\begin{restatable}{theorem}{soundness}{(Soundness)}
    The quantitative algebra $\B = (\Exp, \Sigma^{\B}, d^\B)$ is a model of the quantitative theory $\mathsf{REG}$. In other words, for any $e, f \in \Exp$ and $\e \in \bQ$, if $\Gamma \vdash e \equiv_\e f \in \mathsf{REG}$, then $\Gamma \models_\B e \equiv_\e f$
\end{restatable}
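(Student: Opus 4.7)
The plan is a routine induction on the length of the derivation of $\Gamma \vdash e \equiv_\e f$ in \textsf{REG}. For the inductive step, the generic quantitative inference rules (\Refl, \Symm, \Triang, \Max, \Cont, \Subst, \Cut, \Assum) are known to be validated by any quantitative algebra \cite{Mardare:2016:Quantitative}, and the rule \Nexp{} holds in $\B$ precisely because $\B$ is a quantitative algebra (\Cref{lem:quantitative_algebra}). Thus it only remains to verify that every axiom of \textsf{REG} is satisfied by $\B$ under every assignment $\iota : X \to \Exp$. Since $\iota$ just instantiates the metavariables by concrete regular expressions, it suffices to check validity for arbitrary $e,f,g \in \Exp$.

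The zero-distance axioms split into two families. The semilattice laws \textsf{(SL1--SL4)}, the sequencing laws \textsf{(1S), (S), (S1), (0S), (S0), (D1), (D2)} and the loop laws \textsf{(Unroll), (Tight)} all assert $\vdash e \equiv_0 f$ for pairs $e,f$ whose denotations are equal as languages; this is standard from Salomaa's and Kozen's developments and reduces, via $d^{\B}(e,f) = d_{\mathcal{L}}(\sem{e}, \sem{f})$, to the fact that $d_{\mathcal{L}}$ sends equal languages to $0$. The axiom \Top{} is immediate because $d_{\mathcal{L}}$ is $1$-bounded.

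The genuinely quantitative cases are \textsf{(SL5)} and \dPref. For both I would unfold $d_{\mathcal{L}}$ using the fact that it is a fixpoint of $\Phi_{\mathcal{L}}$, so that for all languages $L,M \subseteq A^{\ast}$,
\[
d_{\mathcal{L}}(L,M) = \max\bigl\{d_2(o_L(L), o_L(M)),\; \lambda \cdot \max_{a \in A} d_{\mathcal{L}}(L_a, M_a)\bigr\}.
\]
For \dPref, note that $\sem{a \seq e} = \{a\} \diamond \sem{e}$ has empty output and derivative $\sem{e}$ at $a$, and $\emptyset$ at every $b \neq a$. Plugging into the fixpoint equation yields $d^{\B}(a \seq e, a \seq f) = \lambda \cdot d^{\B}(e,f) \leq \lambda \cdot \e \leq \e'$ whenever $\e' \geq \lambda \e$. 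For \textsf{(SL5)}, the key sub-claim is
\[
d_{\mathcal{L}}(L_1 \cup L_2,\, M_1 \cup M_2) \;\leq\; \max\bigl(d_{\mathcal{L}}(L_1,M_1),\, d_{\mathcal{L}}(L_2,M_2)\bigr),
\]
which I would establish by observing that any word distinguishing $L_1 \cup L_2$ from $M_1 \cup M_2$ must already distinguish $L_i$ from $M_i$ for some $i \in \{1,2\}$; hence the shortest distinguishing word on the left has length at least the minimum of the two shortest distinguishing words on the right, and the inequality follows from $\lambda \in \,]0,1[$. Applying this sub-claim with $L_i = \sem{e_i}$, $M_i = \sem{g_i}$ for the appropriate pairs yields $d^{\B}(e+f, g+h) \leq \max(\e,\e')$ from the hypotheses $d^{\B}(e,g)\leq\e$ and $d^{\B}(f,h)\leq\e'$.

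The main obstacle is really the sub-claim for \textsf{(SL5)}: one must make sure that shortest-distinguishing-word behaves well under unions, which depends on the precise form of $d_{\mathcal{L}}$ rather than any abstract monotonicity. Once this is in place, the remaining quantitative cases collapse to a one-line computation using the fixpoint equation, and the zero-distance cases are entirely standard; assembling these verifications completes the induction.
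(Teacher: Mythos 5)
Your proposal is correct and follows the same overall structure as the paper's proof: induction on the derivation, generic rules handled by pseudometric axioms and \Cref{lem:quantitative_algebra}, zero-distance axioms reduced to language equality, and \dPref{} verified by unfolding the fixpoint equation of $\Phi_{\mathcal{L}}$ exactly as the paper does. The one genuine divergence is \textsf{(SL5)}: you treat it as the main obstacle and prove a semantic union lemma $d_{\mathcal{L}}(L_1 \cup L_2, M_1 \cup M_2) \leq \max(d_{\mathcal{L}}(L_1,M_1), d_{\mathcal{L}}(L_2,M_2))$ directly from the shortest-distinguishing-word formula, whereas the paper sidesteps the semantics entirely by observing that \textsf{(SL5)} is \emph{derivable} from $(\Assum)$, $(\Max)$, $(\Cut)$ and $(\Nexp)$, so its soundness is inherited from rules already verified. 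Your union lemma is correct (a word distinguishing the unions must distinguish one of the component pairs, and $\lambda \in \left]0,1\right[$ converts the length bound into the distance bound), but note that it is essentially a restatement of the nonexpansivity of $+$ already established in \Cref{lem:quantitative_algebra} --- instantiating that lemma at $\max(\e,\e')$ gives \textsf{(SL5)} immediately --- so the "main obstacle" you identify is work the paper has already done elsewhere. Both routes are valid; the paper's is shorter given its lemma inventory, while yours is self-contained at this point in the argument.
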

\begin{proof}
By the structural induction on the judgement $\Gamma \vdash e \equiv_{\e} f \in \mathsf{REG}$. $(\Subst)$, $(\Cut)$ and $(\Assum)$ deduction rules from classical logic hold immediately. The soundness of $(\Refl)$, $(\Symm)$, $(\Triang)$, $(\Cont)$ and $(\Max)$ follows from the fact that $d^{\B}$ is a pseudometric. $(\Nexp)$ follows from the fact that interpretations of symbols from the algebraic signature are nonexpansive (\cref{lem:quantitative_algebra}). Recall that $d^{\B} = d_{\mathcal{L}} \circ (\sem{-} \times \sem {-})$. The soundness of $(\Top)$ follows from the fact that $d_{\mathcal{L}}$ is a 1-pseudometric. Additionally, for all axioms in the form $\vdash e \equiv_0 f$ it suffices to show that $\sem{e} = \sem{f}$. $(\mathsf{SL1})$, $(\mathsf{SL2})$, $(\mathsf{SL3})$, $(\mathsf{SL4})$, $(\mathsf{1S})$, $(\mathsf{S})$, $(\mathsf{S1})$, $(\mathsf{0S})$, $(\mathsf{S0})$, $(\mathsf{D1})$, $(\mathsf{D2})$, $(\mathsf{Unroll})$ and $(\mathsf{Tight})$ are taken from Salomaa's axiomatisation of language equivalence of regular expressions~\cite{Salomaa:1966:Two} and thus both sides of those equations denote the same formal languages~\cite[Theorem~5.2]{Wagemaker:2019:Completeness}. For $(\dPref)$ assume that the premise is satisfied in the model, that is $d_{\mathcal{L}}(\sem{e},\sem{f}) \leq \e$. Let $\e' \geq \lambda \cdot \e$. We show the following:
\begin{align*}
    d^{\B}(a \seq e, a \seq f) &= d_{\mathcal{L}}(\sem{a \seq e}, \sem{a \seq f}) \tag{Def. of $d^{\B}$}\\
    & = \Phi_{\mathcal{L}}(d_{L})(\sem{a \seq e}, \sem{a \seq f}) \tag{$d_L$ is a fixpoint of $\Phi_{\mathcal{L}}$}\\
    &= \max\{d_{2}(o_{\mathcal{L}}(a \seq e), o_{\mathcal{L}}(a \seq f)), \lambda \cdot \max_{a' \in A} d_{\mathcal{L}}(\sem{a \seq e}_{a'}, \sem{a \seq f}_{a'} )\}\\
    &= \lambda \cdot d_{\mathcal{L}}(\sem{e}, \sem{f}) \tag{Def. of final automaton} \\
    & \leq \lambda \cdot \epsilon \leq \e' \tag{Assumptions} \\
\end{align*}
Finally, $(\mathsf{SL5})$ is derivable from other axioms\footnote{We included $(\mathsf{SL5})$ as an axiom to highlight the similarity of our inference system with axiomatisations of language equivalence of regular expressions~\cite{Salomaa:1966:Two,Kozen:1994:Completeness} containing the axioms of semilattices with bottom. In the previous work~\cite{Mardare:2016:Quantitative}, $(\mathsf{SL1-SL5})$ are precisely the axioms of \emph{Quantitative Semilattices with zero} axiomatising the Hausdorff distance.}. 
If $\e = \max(\e, \e')$ then $\{e \equiv_\epsilon g \} \vdash e \equiv_{\max(\e, \e')} g$ holds by $(\Assum)$. If $\epsilon < \max(\e, \e')$, then we can derive the quantitative judgement above using $(\Max)$. By a similar line of reasoning, we can show that $\{f \equiv_{\epsilon'} h \} \vdash f \equiv_{\max(\e, \e')} h$. Finally, using $(\Cut)$ and $(\Nexp)$, we can show that $\{e \equiv_{\e} g, f \equiv_{\e'} h\} \vdash e + f \equiv_{\max(\e,\e')} g + h$ as desired.
\end{proof}
We now revisit the example from \Cref{sec: introduction}. Recall that states marked as initial of the left and middle automata can be respectively represented as $a^*$ and $a + 1$. The shortest word distinguishing languages representing those expressions is $aa$. If we fix $\lambda = \frac{1}{2}$, then $d_{\mathcal{L}}(\sem{a^*}, \sem{a+ 1}) = \frac{1}{4}=\left(\frac{1}{2}\right)^{|aa|}$. We can derive this distance through the means of axiomatic reasoning using the quantitative equational theory \textsf{REG} in the following way: 
\begin{example}
 	\begin{align*}
		\vdash a^* &\equiv_1 0 \tag{$\Top$} \\
		\vdash a \seq a^{*} &\equiv_{\frac{1}{2}} a \seq 0 \tag{$\dPref$}\\
		\vdash a \seq a^* + 1&\equiv_{\frac{1}{2}} a \seq 0 + 1  \tag{$\vdash 1 \equiv_0 1$ and \textsf{SL5}}\\
		\vdash a^{*} &\equiv_{\frac{1}{2}} 1 \tag{$\Triang$, \textsf{Unroll}, \textsf{S0} and \textsf{SL4}}\\
		\vdash a \seq a^* &\equiv_{\frac{1}{4}} a \seq 1 \tag{$\dPref$}\\
		\vdash a \seq a^* + 1 &\equiv_{\frac{1}{4}} a \seq 1 + 1  \tag{$\vdash 1 \equiv_0 1$ and \textsf{SL5}}\\
		\vdash a^* &\equiv_{\frac{1}{4}} a + 1 \tag{$\Triang$, \textsf{Unroll} and \textsf{S1}}\\
	\end{align*}
\end{example}
\subparagraph*{(The lack of) the fixpoint axiom.} Traditionally, completeness of inference systems for behavioural equivalence of languages of expressions featuring recursive constructs such as Kleene star or $\mu$-recursion~\cite{Milner:1984:Complete} rely crucially on fixpoint introduction rules. Those allow showing that an expression is provably equivalent to a looping construct if it exhibits some form of self-similarity, typically subject to productivity constraints. As an illustration, Salomaa's axiomatisation of language equivalence of regular expressions incorporates the following inference rule:
\begin{equation}\label{eq: salomaa}
	\inferrule{g \equiv e \seq g + f \qquad \epsilon \notin \sem{e}}{g \equiv e^* \seq f}
\end{equation}
The side condition on the right states that the loop body is \emph{productive}, that is a deterministic automaton corresponding to an expression $e$ cannot immediately reach acceptance without performing any transitions. This is simply equivalent to the language $\sem{e}$ not containing the empty word. It would be reasonable for one to expect \textsf{REG} to contain a similar rule to be complete, especially since it should be able to prove language equivalence of regular expressions (by proving that they are in distance zero from each other). Furthermore, all axioms of Salomaa except \Cref{eq: salomaa} are contained in \textsf{REG} as rules for distance zero.

It turns out that in the presence of the infinitary continuity \textsf{(Cont)} rule of quantitative deduction systems and the \textsf{(\dPref)} axiom of \textsf{REG}, the Salomaa's inference rule (\Cref{eq: salomaa}) becomes a derivable fact for distance zero. First of all, one can show that \textsf{(\dPref)} can be generalised from prepending single letters to prepending any regular expression satisfying the side condition from \Cref{eq: salomaa}.

\begin{restatable}{lemma}{generalisedpref}\label{lem:generalised_pref}
	Let $e,f,g \in \Exp$, such that $\epsilon \notin \sem{e}$. Then, 	$
		 \{ f \equiv_\e g \} \vdash e \seq f \equiv_{\e'} e \seq g 
	$ is derivable  using the axioms of \textsf{REG}
 for all $\e'\geq \lambda \cdot \e$.
\end{restatable}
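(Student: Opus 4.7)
The plan is to prove this by structural induction on $e$, leveraging the productivity hypothesis $\epsilon \notin \sem{e}$ to eliminate the two cases that would violate it. Since $\epsilon \in \sem{1}$ and $\epsilon \in \sem{e_1^*}$, neither $e = 1$ nor $e = e_1^*$ can occur. The remaining base cases are handled directly: for $e = 0$, applying \textsf{(0S)} to both sides gives $0 \seq f \equiv_0 0 \equiv_0 0 \seq g$, and then \textsf{(Triang)}, \textsf{(Symm)}, and \textsf{(Max)} lift this to $\equiv_{\e'}$ for any $\e' \geq 0$; for $e = a$, the claim is an immediate instantiation of the \textsf{(\dPref)} axiom.

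For the inductive case $e = e_1 + e_2$, the hypothesis forces $\epsilon \notin \sem{e_i}$ for both $i$. I would apply the induction hypothesis to each $e_i$ to obtain $\{f\equiv_\e g\} \vdash e_i \seq f \equiv_{\e'} e_i \seq g$, combine the two derivations using \textsf{(SL5)} (noting $\max(\e',\e') = \e'$) to yield $e_1 \seq f + e_2 \seq f \equiv_{\e'} e_1 \seq g + e_2 \seq g$, and then re-bracket using \textsf{(D2)} (plus \textsf{(Symm)}) and \textsf{(Triang)} to conclude $(e_1 + e_2) \seq f \equiv_{\e'} (e_1 + e_2) \seq g$. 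Chaining these derivations uses \textsf{(Cut)}.

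The sequential case $e = e_1 \seq e_2$ is the main obstacle, because productivity of $e_1 \seq e_2$ is disjunctive: it only requires that \emph{at least one} of $e_1, e_2$ be productive. I would therefore split on which factor is productive. If $\epsilon \notin \sem{e_2}$, apply the induction hypothesis to $e_2$ to obtain $e_2 \seq f \equiv_{\e'} e_2 \seq g$, then use \textsf{(Nexp)} on $\seq$ (pairing this with $e_1 \equiv_{\e'} e_1$, derived from \textsf{(Refl)} and \textsf{(Max)}) to get $e_1 \seq (e_2 \seq f) \equiv_{\e'} e_1 \seq (e_2 \seq g)$, and finish by associativity \textsf{(S)} and \textsf{(Triang)}. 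Otherwise only $e_1$ is productive; here the order reverses: first lift the premise via \textsf{(Nexp)} to $e_2 \seq f \equiv_\e e_2 \seq g$ without any $\lambda$-factor, then apply the induction hypothesis to $e_1$ with this derived premise to yield $e_1 \seq (e_2 \seq f) \equiv_{\e'} e_1 \seq (e_2 \seq g)$, and again conclude via \textsf{(S)}.

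In both sequential subcases the crucial factor of $\lambda$ is extracted precisely from the productive component through the induction hypothesis, while nonexpansiveness of $\seq$ handles the non-productive component without inflating the distance. The delicate point, which I expect to require the most care in the write-up, is to verify that the disjunctive case split on productivity of $e_1 \seq e_2$ genuinely exhausts all situations and that \textsf{(Cut)} correctly stitches the assumption-dependent derivations at each step.
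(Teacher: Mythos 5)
Your proposal is correct and follows essentially the same route as the paper's proof: structural induction on $e$, discharging $e=1$ and $e=e_1^*$ via the productivity hypothesis, handling $e=0$ with \textsf{(0S)}, $e=a$ with \textsf{(\dPref)}, $e=e_1+e_2$ with \textsf{(SL5)} and \textsf{(D2)}, and splitting the sequencing case on which factor is productive. The only (harmless) divergence is in the subcase where both factors of $e_1 \seq e_2$ are productive: you use nonexpansiveness of $\seq$ for the outer factor, whereas the paper applies the induction hypothesis a second time and absorbs the extra contraction via $\lambda \cdot \e' < \e'$; both yield the same bound.
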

With the above lemma in hand, one can inductively show that if $g \equiv_0 e \seq g + f$ and $\epsilon \notin \sem{e}$, then $g$ gets arbitrarily close to $e^* \seq f$. Intuitively, the more we unroll the loop in $e^* \seq f$ using \textsf{(Unroll)} and the more we unroll the definition of $g$, then the closer both expressions become.
\begin{restatable}{lemma}{starlemma}\label{lem:star_lemma}
Let $e,f,g \in \Exp$, such that $\epsilon \notin \sem{e}$ and let $n \in \bN$. Then, $ \{g \equiv_0 e \seq g  + f\} \vdash g \equiv_{\e} e^{*} \seq f$ is derivable using the axioms of $\mathsf{REG}$ for all $\e \geq \lambda^n $.
\end{restatable}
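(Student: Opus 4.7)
The plan is to proceed by induction on $n \in \bN$. The base case $n = 0$ is immediate: by the \textsf{(Top)} axiom we have $\vdash g \equiv_1 e^{*} \seq f$, and for any $\e \geq \lambda^0 = 1$ the rule \textsf{(Max)} gives $\vdash g \equiv_\e e^{*}\seq f$, so in particular this judgement holds under the assumption $\{g \equiv_0 e \seq g + f\}$.

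For the inductive step, assume that for every $\e_1 \geq \lambda^n$ we have $\{g \equiv_0 e \seq g + f\} \vdash g \equiv_{\e_1} e^{*} \seq f$. The goal is to upgrade this to every $\e \geq \lambda^{n+1}$. First, I would instantiate the induction hypothesis with $\e_1 = \lambda^n$, yielding $g \equiv_{\lambda^n} e^{*} \seq f$. Since $\e \notin \sem{e}$, the generalised prefix lemma (\Cref{lem:generalised_pref}) applies to give $e \seq g \equiv_{\lambda^{n+1}} e \seq (e^{*} \seq f)$, because $\lambda^{n+1} \geq \lambda \cdot \lambda^n$. Combining this with $\vdash f \equiv_0 f$ (by \textsf{(Refl)}) through \textsf{(SL5)} produces
\[
	e \seq g + f \;\equiv_{\max(\lambda^{n+1},0)}\; e \seq (e^{*} \seq f) + f,
\]
that is, $e \seq g + f \equiv_{\lambda^{n+1}} e \seq (e^{*} \seq f) + f$.

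Next, I would show that the right-hand side is provably equal (at distance $0$) to $e^{*} \seq f$ using only the Salomaa-style axioms embedded in $\mathsf{REG}$: by \textsf{(Unroll)} we have $e^{*} \equiv_0 e \seq e^{*} + 1$, hence by \textsf{(Nexp)} and \textsf{(D2)}, $e^{*} \seq f \equiv_0 (e \seq e^{*} + 1) \seq f \equiv_0 e \seq e^{*} \seq f + 1 \seq f$, and then \textsf{(1S)} gives $e^{*} \seq f \equiv_0 e \seq e^{*} \seq f + f$. Using \textsf{(S)} to reassociate and \textsf{(Triang)} to chain the steps, we obtain $e \seq g + f \equiv_{\lambda^{n+1}} e^{*} \seq f$. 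Finally, combining the assumption $g \equiv_0 e \seq g + f$ with this via \textsf{(Triang)} yields $g \equiv_{\lambda^{n+1}} e^{*} \seq f$, and one application of \textsf{(Max)} gives the same judgement at every $\e \geq \lambda^{n+1}$, completing the induction.

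The main obstacle is not any single step but the careful bookkeeping: one must keep the quantitative indices aligned when invoking \textsf{(Triang)} and \textsf{(SL5)} (so that zero-distance congruence steps do not inflate the error) and one must verify that \Cref{lem:generalised_pref} applies with exactly the right exponent of $\lambda$. Everything else is a mechanical chain of axiomatic rewrites using associativity, distributivity, and the unit laws for $\seq$.
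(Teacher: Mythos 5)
Your overall strategy (induction on $n$, one application of \Cref{lem:generalised_pref} per step, then \textsf{(SL5)}, \textsf{(Unroll)}, \textsf{(1S)}, \textsf{(D2)} and \textsf{(Triang)} to fold the right-hand side back into $e^{*}\seq f$) is the same as the paper's, and the purely equational part of your argument is fine. However, there is a genuine gap in the quantitative bookkeeping: a quantitative equation $t \equiv_{\e} s$ is only defined for $\e \in \bQ$, and $\lambda$ is an arbitrary element of $\left]0,1\right[$, not assumed rational. Consequently you may not instantiate the induction hypothesis at $\e_1 = \lambda^{n}$, nor derive the judgement $e \seq g \equiv_{\lambda^{n+1}} e \seq e^{*}\seq f$, nor apply \textsf{(Max)} starting from the index $\lambda^{n+1}$: none of these are well-formed judgements when $\lambda$ is irrational. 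This is precisely the obstacle the paper's proof is built around (it remarks explicitly that $\e\cdot\lambda^{-1}$ ``is not guaranteed to be rational'').

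The repair keeps every index rational and uses the infinitary \textsf{(Cont)} rule to close the gap. Fix a rational $\e \geq \lambda^{n+1}$ and an arbitrary rational $\e' > \e$. Choose a decreasing sequence of rationals $r_k \to \lambda^{-1}$ and pick $N$ with $\e' \geq \e \cdot \lambda \cdot r_N$ (possible since $\lambda\cdot r_k$ decreases to $1$ and $\e'/\e > 1$). Then $\e \cdot r_N$ is rational and $\e \cdot r_N \geq \e \cdot \lambda^{-1} \geq \lambda^{n}$, so the induction hypothesis gives $\vdash g \equiv_{\e \cdot r_N} e^{*}\seq f$, and \Cref{lem:generalised_pref} gives $\vdash e \seq g \equiv_{\e'} e \seq e^{*}\seq f$. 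As $\e' > \e$ was arbitrary, \textsf{(Cont)} yields $\vdash e\seq g \equiv_{\e} e \seq e^{*}\seq f$, after which your \textsf{(SL5)}/\textsf{(Unroll)}/\textsf{(Triang)} chain goes through verbatim at index $\e$, with no need for a final \textsf{(Max)} beyond handling $\e$ itself. (Minor typo: the side condition for \Cref{lem:generalised_pref} is $\epsilon \notin \sem{e}$, the empty word, not the rational $\e$.)
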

Having the result above, we can now use the infinitary \textsf{(Cont)} rule capturing the limiting property of decreasing chain of overapproximations to the distance and show the derivability of Salomaa's inference rule.
\begin{lemma}
	Let $e,f,g \in \Exp$, such that $\epsilon \notin \sem{e}$. Then, $\{g \equiv_0 e \seq g + f\} \vdash g \equiv_{0} e^{*} \seq f$ is derivable using the axioms of \textsf{REG}.
	
\end{lemma}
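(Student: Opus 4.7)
The plan is to apply the continuity rule \textsf{(Cont)} to the family of overapproximations given by \Cref{lem:star_lemma}. Concretely, since $\lambda \in {]0,1[}$, the sequence $\{\lambda^n\}_{n \in \bN}$ is strictly decreasing with infimum $0$, so for every rational $\e' > 0$ there exists some $n \in \bN$ with $\lambda^n \leq \e'$.

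First I would invoke \Cref{lem:star_lemma} to obtain, for each $n \in \bN$, the derivation $\{g \equiv_0 e \seq g + f\} \vdash g \equiv_{\lambda^n} e^* \seq f$. Combining this with $(\Max)$, I would upgrade this to $\{g \equiv_0 e \seq g + f\} \vdash g \equiv_{\e'} e^* \seq f$ for every rational $\e' > 0$, since we can always find $n$ with $\lambda^n \leq \e'$. Call this family of quantitative judgements $\Theta = \{g \equiv_{\e'} e^* \seq f \mid \e' > 0\}$. We then have $\{g \equiv_0 e \seq g + f\} \vdash \Theta$.

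Next I would apply the infinitary continuity rule $(\Cont)$ instantiated at $\e = 0$, which states $\{t \equiv_{\e'} s \mid \e' > 0\} \vdash t \equiv_0 s$, to derive $\Theta \vdash g \equiv_0 e^* \seq f$. Finally, a single application of $(\Cut)$ yields $\{g \equiv_0 e \seq g + f\} \vdash g \equiv_0 e^* \seq f$, as required.

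There is no real obstacle — all the hard work is done in \Cref{lem:star_lemma}; the present lemma is essentially the observation that the geometric bounds $\lambda^n$ converge to $0$ and can therefore be collapsed by the infinitary $(\Cont)$ rule. This is precisely the general phenomenon emphasised in the surrounding discussion: once one has a quantitative rule that decreases distances by a factor $\lambda$ under prefixing, continuity subsumes the role of a fixpoint introduction axiom such as Salomaa's \Cref{eq: salomaa}.
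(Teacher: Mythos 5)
Your proposal is correct and follows essentially the same route as the paper: both obtain, for each rational $\e' > 0$, some $n$ with $\lambda^n \leq \e'$ (the paper via $N = \lceil \log_\lambda \e' \rceil$), invoke \Cref{lem:star_lemma}, and collapse the resulting family of judgements with \textsf{(Cont)}. The only cosmetic difference is that \Cref{lem:star_lemma} already yields $g \equiv_{\e'} e^* \seq f$ directly for any rational $\e' \geq \lambda^n$, so your intermediate judgement at index $\lambda^n$ (which need not be rational) and the subsequent $(\Max)$ step can be skipped.
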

\begin{proof}
To deduce that $\vdash g \equiv_0 e^{*} \seq f$ using \textsf{(Cont)} it suffices to show that $\vdash g \equiv_{\e} e^{*} \seq f$ for all $\e > 0$.  To do so, pick an arbitrary $\e > 0$ and let $N = \lceil \log_{\lambda} \e \rceil$. Observe that $\lambda^N = \lambda^{\lceil \log_{\lambda} \e \rceil} \leq \lambda^{\log_{\lambda} \e} = \e$. Because of \Cref{lem:star_lemma} we have that $\vdash g \equiv_{\e} e^{*} \seq f$, which completes the proof.
\end{proof}
\section{Completeness}\label{sec: Completeness}
We now move on to the central result of this paper, which is the completeness of \textsf{REG} with respect to the shortest-distinguishing-word metric on languages denoting regular expressions. We use the strategy from the proof of completeness of quantitative axiomatisation of probabilistic bisimilarity distance~\cite{Bacci:2018:Bisimilarity}. It turns out that the results from~\cite{Bacci:2018:Bisimilarity} rely on properties that are not unique to the Kantorovich/Wassertstein lifting and can be also established for instances of the abstract coalgebraic framework~\cite{Baldan:2018:Coalgebraic}.

The heart of our argument relies on the fact that the distance between languages denoting regular expressions can be calculated in a simpler way than applying the Knaster-Tarski fixpoint theorem while looking at the infinite-state final automaton of all formal languages over some fixed alphabet. In particular, regular expressions denote the behaviour of finite-state deterministic automata. Since automata homomorphisms are non-expansive mappings, the distance between languages $\sem{e}$ and $\sem{f}$ of some arbitrary regular expressions $e, f \in \Exp$ is the same as the distance between states in some \textsf{DFA} whose languages corresponds to $\sem{e}$ and $\sem{f}$. To be precise, we will look at the finite subautomaton $\gen{[e]_{\acirel}, [f]_{\acirel}}{\mathcal{Q}}$ of the $\acirel$ quotient of the Brzozowski automaton. The reason we care about deterministic finite automata is that it turns out that one can calculate the behavioural distance between two states through iterative approximation from above, which can be also derived axiomatically using the \textsf{(Cont)} rule of quantitative deduction systems. We start by showing how this simplification works and then we move on to establishing completeness.

\subparagraph*{Behavioural distance on finite-state automata. } Consider a deterministic automaton $\mathcal{M} = (M, \langle o_M, t_M\rangle)$. The least fixpoint of a monotone endomap $\Phi_{\mathcal{M}} : D_{M} \to D_{M}$ on the complete lattice of $1$-pseudometrics on the set $M$ results in $d_{\mathcal{M}}$, which is a behavioural pseudometric on the states of the automaton $\mathcal{M}$.
It is noteworthy that $\Phi_{\mathcal{M}}$ exhibits two generic properties. Firstly, $\Phi_{\mathcal{M}}$ behaves well within the Banach space structure defined by the supremum norm.
\begin{restatable}{lemma}{operatornonexpansive}\label{lem:operator_nonexpansive}
    $\Phi_{\mathcal{M}} : D_M \to D_M$ is nonexpansive with respect to the supremum norm. In other words, for all $d, d' \in D_M$ we have that
    $\|\Phi_{\mathcal{M}}(d') - \Phi_{\mathcal{M}}(d) \| \leq \|d' - d \|$.
\end{restatable}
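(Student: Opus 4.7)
The plan is to show the stronger pointwise estimate $|\Phi_{\mathcal{M}}(d')(m,m') - \Phi_{\mathcal{M}}(d)(m,m')| \leq \lambda \cdot \|d' - d\|$ for every pair $m, m' \in M$, and then take the supremum to conclude. Fix $m, m' \in M$. By definition,
\[
\Phi_{\mathcal{M}}(d)(m,m') = \max\bigl\{d_2(o_M(m), o_M(m')),\ \lambda \cdot \max_{a \in A} d(m_a, m'_a)\bigr\},
\]
and similarly for $d'$. Crucially, the first argument of the outer $\max$ is the same expression $d_2(o_M(m), o_M(m'))$ in both cases, since it depends only on the automaton structure, not on $d$ or $d'$.

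Next I would invoke the elementary inequality $|\max(x,y) - \max(x,z)| \leq |y - z|$, valid for any reals $x, y, z$. Applying this with $x = d_2(o_M(m), o_M(m'))$, together with the analogous inequality $|\max_{a\in A} f(a) - \max_{a\in A} g(a)| \leq \max_{a \in A} |f(a) - g(a)|$ for the inner $\max$ over the finite alphabet $A$, yields
\[
|\Phi_{\mathcal{M}}(d')(m,m') - \Phi_{\mathcal{M}}(d)(m,m')| \leq \lambda \cdot \max_{a \in A} |d'(m_a, m'_a) - d(m_a, m'_a)|.
\]
Since each term on the right is bounded by $\sup_{x,y \in M} |d'(x,y) - d(x,y)| = \|d' - d\|$, the right-hand side is at most $\lambda \cdot \|d' - d\|$.

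Taking the supremum over $m, m' \in M$ gives $\|\Phi_{\mathcal{M}}(d') - \Phi_{\mathcal{M}}(d)\| \leq \lambda \cdot \|d' - d\|$, and since $\lambda \in \left]0,1\right[$, this immediately implies the desired nonexpansiveness bound $\|\Phi_{\mathcal{M}}(d') - \Phi_{\mathcal{M}}(d)\| \leq \|d' - d\|$. (In fact, the argument shows $\Phi_{\mathcal{M}}$ is a $\lambda$-contraction, which will presumably be used elsewhere in the completeness proof to obtain rates of convergence.) There is no real obstacle here: the only subtlety is remembering that the output-agreement term $d_2(o_M(m), o_M(m'))$ is common to both $\Phi_{\mathcal{M}}(d)(m,m')$ and $\Phi_{\mathcal{M}}(d')(m,m')$, which is exactly what lets the outer $\max$ be peeled off without cost.
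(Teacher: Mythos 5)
Your proof is correct and follows essentially the same route as the paper's: both reduce to a pointwise bound and extract the factor $\lambda$ from the inner maximum over $A$. The only real difference is presentational --- the paper assumes WLOG $d \sqsubseteq d'$ and case-splits on whether $o_M(m) = o_M(m')$, whereas you absorb both steps at once via the inequality $|\max(x,y)-\max(x,z)| \leq |y-z|$; your explicitly stated stronger conclusion, that $\Phi_{\mathcal{M}}$ is a $\lambda$-contraction, is precisely the estimate the paper re-derives "through an identical line of reasoning" when proving uniqueness of the fixpoint in \Cref{lem: uniquefp}.
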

\begin{proof}
    We can safely assume that $d \sqsubseteq d'$, as other case will be symmetric. It sufices to show that for all $m, m' \in M$, $\Phi_{\mathcal{M}}(d')(m,m') - \Phi_{\mathcal{M}}(d)(m,m') \leq \|d' - d \|$. First, let's consider the case when $o_M(m) \neq o_M(m')$ and hence $d_2(m,m')=1$. In such a scenario, it holds that $\Phi_{\mathcal{M}}(d')(m,m') - \Phi_{\mathcal{M}}(d)(m,m') = 0 \leq \|d'-d \|$. From now on, we will assume that $o_M(m) = o_M(m)$ and hence $d_2(m,m')=0$. We have the following
    \begin{align*}
        \Phi_{\mathcal{M}}(d')(m,m') - \Phi_{\mathcal{M}}(d)(m,m') &= \lambda \cdot \max_{a \in A} d'(m_a, m'_a) - \lambda \cdot \max_{a \in A} d(m_a, m'_a) \\
        &= \lambda \cdot \left( \max_{a \in A} d'(m_a, m'_a) - \max_{a \in A} d(m_a,  m'_a) \right) \\
        &\leq \lambda \cdot \left(\max_{a \in A} \{d'(m_a, m'_a) - d(m_a, m'_a) \}\right) \\  
        &\leq \lambda \cdot \sup_{n,n' \in M} \{ d'(n,n') - d(n,n')\} \\
        &= \lambda \cdot \|d' - d \| \leq \| d' - d \|
    \end{align*}
\end{proof}
Secondly, it turns out that $\Phi_{\mathcal{M}}$ has only one fixpoint. This means that if we want to calculate $d_{\mathcal{M}}$ it suffices to look at any fixpoint of $\Phi_{\mathcal{M}}$. This will enable a simpler characterisation, than the one given by the Knaster-Tarski fixpoint theorem.
\begin{restatable}{lemma}{uniquefp}\label{lem: uniquefp}
    $\Phi_{\mathcal{M}}$ has a unique fixed point.
\end{restatable}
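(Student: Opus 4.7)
The plan is to exploit the contractive nature of $\Phi_{\mathcal{M}}$, which is actually stronger than the nonexpansiveness formally stated in \Cref{lem:operator_nonexpansive}. Looking at the chain of inequalities in the proof of that lemma, the bound obtained before the final step is $\lambda \cdot \|d'-d\|$, so $\Phi_{\mathcal{M}}$ is in fact a $\lambda$-contraction in the supremum norm. Since $\lambda \in \left]0,1\right[$, this gives us exactly the hypothesis needed to run the standard uniqueness argument from the Banach fixed-point theorem.

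Concretely, I will argue by contradiction (or equivalently, by a direct calculation). Suppose $d_1, d_2 \in D_M$ are both fixed points of $\Phi_{\mathcal{M}}$, i.e.\ $\Phi_{\mathcal{M}}(d_1) = d_1$ and $\Phi_{\mathcal{M}}(d_2) = d_2$. Since both are $1$-bounded pseudometrics, $\|d_1 - d_2\| \leq 1 < \infty$, so the quantity we manipulate is finite. Then, re-running the estimate from the proof of \Cref{lem:operator_nonexpansive} (or invoking it after noting that its internal computation actually establishes the contraction bound) yields
\[
    \|d_1 - d_2\| \;=\; \|\Phi_{\mathcal{M}}(d_1) - \Phi_{\mathcal{M}}(d_2)\| \;\leq\; \lambda \cdot \|d_1 - d_2\|.
\]
Since $\lambda < 1$, this inequality forces $\|d_1 - d_2\| = 0$, whence $d_1(m,m') = d_2(m,m')$ for all $m, m' \in M$, i.e.\ $d_1 = d_2$. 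Existence of a fixed point is already known from the Knaster--Tarski theorem applied earlier, so combined with uniqueness this pins down $d_{\mathcal{M}}$ as the unique element of $D_M$ satisfying $\Phi_{\mathcal{M}}(d) = d$.

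The only subtlety I anticipate is being careful that the quantities in the contraction inequality are genuinely finite so that one can divide by or cancel $\|d_1 - d_2\|$; this is immediate because $D_M$ consists of $1$-bounded pseudometrics. If one wanted to state the contraction property as a standalone intermediate fact rather than quoting the middle of a previous proof, I would either strengthen \Cref{lem:operator_nonexpansive} explicitly to give the $\lambda$-factor or restate the bound inline as a one-line corollary before starting the uniqueness argument. No additional machinery from order theory or from the automata-theoretic structure is needed beyond what the contraction bound already provides.
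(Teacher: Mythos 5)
Your proposal is correct and follows essentially the same route as the paper: the paper's own proof also observes that the computation in \Cref{lem:operator_nonexpansive} actually yields the contraction bound $\|\Phi_{\mathcal{M}}(d')-\Phi_{\mathcal{M}}(d)\|\leq\lambda\cdot\|d'-d\|$ and concludes $\|d'-d\|=0$ for any two fixed points. Your version is if anything slightly cleaner, since the paper additionally picks a pair $(m,m')$ attaining the supremum and runs a case split on $o_M(m)\neq o_M(m')$, neither of which is needed once the contraction inequality is in hand.
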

\begin{proof}
    Let $d, d' \in D_M$ be two fixed points of $\Phi_{\mathcal{M}}$, that is $\Phi_{\mathcal{M}}(d)=d$ and $\Phi_{\mathcal{M}}(d')=d'$. We can safely assume that $d \sqsubseteq d'$, as the other case is symmetric. We wish to show that $d = d'$ and to do so we will use proof by contradiction. 
    
    Assume that $d \neq d'$, and hence there exist $m,m' \in M$, such that $d(m,m')<d'(m,m')$ and $\|d' - d \| = d'(m,m') - d(m,m') \neq 0 $. First, consider the case when $o_M(m) \neq o_M(m')$. In such a case both $d(m,m')$ and $d(m,m')$ are equal to $1$ and hence $\|d' - d \| = 0$, which leads to contradiction. From now, we can safely assume that $o_M(m)=o_M(m')$. Through an identical line of reasoning to the proof of \cref{lem:operator_nonexpansive}, we can show that $\|\Phi_{\mathcal{M}}(d') - \Phi_{\mathcal{M}}(d)\| \leq \lambda \cdot \|d' - d \|$. Since both $d$ and $d'$ are fixed points, this would mean that $\|d' - d \| \leq \lambda \cdot \| d' - d \|$. Since $\lambda \in \left]0, 1 \right[$, this would imply that $\|d' - d\| = 0$ leading again to contradiction.
\end{proof}
In particular, we will rely on the characterisation given by the Kleene fixpoint theorem~\cite[Theorem~2.8.5]{Sangiorgi:2011:Coinduction}, which allows to obtain the greatest fixpoint of an endofunction on the lattice as the infimum of the decreasing sequence of finer approximations obtained by repeatedly applying the function to the top element of the lattice.

\begin{theorem}[Kleene fixpoint theorem]\label{thm: kleene}
	Let $(X, \sqsubseteq)$ be a complete lattice with a top element $\top$ and $f : X \to X$ an endofunction that is $\omega$-cocontinuous or in other words for any decreasing chain $\{x_i\}_{i \in \bN}$ it holds that $\inf_{i \in \bN} \{f(x_i)\} = f \left( \inf_{i \in \bN} x_i \right)$.  Then, $f$ possesses a greatest fixpoint, given by 
	$\operatorname{gfp}(f) = \inf_{i \in \bN }\{f^{(i)}(\top)\}$
	where $f^{(n)}$ denotes $n$-fold self-composition of $f$ given inductively by $f^{(0)}(x)=x$ and $f^{(n+1)}(x) = f^{(n+1)}(f(x))$ for all $x \in X$.
\end{theorem}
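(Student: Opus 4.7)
The plan is to set $x_i := f^{(i)}(\top)$ and show directly that $x^{\star} := \inf_{i\in\bN} x_i$ is the greatest fixpoint. The proof splits into three conceptually separate parts: showing that $\{x_i\}$ is actually a decreasing chain (so that $\omega$-cocontinuity applies), showing that $x^{\star}$ is a fixpoint, and showing that it dominates every other fixpoint.

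First I would argue that $\omega$-cocontinuity forces $f$ to be monotone: given $y \sqsubseteq z$ in $X$, apply the cocontinuity hypothesis to the decreasing chain $z, y, y, y, \ldots$ (whose infimum is $y$) to deduce $f(y) = \inf\{f(z), f(y), f(y), \ldots\} \sqsubseteq f(z)$. With monotonicity in hand, a trivial induction starting from $x_0 = \top \sqsupseteq f(\top) = x_1$ yields $x_i \sqsupseteq x_{i+1}$ for every $i$, so $\{x_i\}_{i\in\bN}$ is a decreasing chain and its infimum $x^{\star}$ exists in the complete lattice.

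Next, I would compute $f(x^{\star}) = f\bigl(\inf_i x_i\bigr) = \inf_i f(x_i) = \inf_i x_{i+1} = x^{\star}$, where the second equality is precisely $\omega$-cocontinuity and the final equality uses that dropping $x_0 = \top$ from the set $\{x_i\}_{i\in\bN}$ cannot change the greatest lower bound (as $\top$ is an upper bound on any subset). This shows $x^{\star}$ is a fixpoint. For maximality, let $y \in X$ be any fixpoint of $f$. Then $y \sqsubseteq \top = x_0$, and assuming inductively $y \sqsubseteq x_i$, monotonicity of $f$ together with $y = f(y)$ gives $y = f(y) \sqsubseteq f(x_i) = x_{i+1}$. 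Hence $y \sqsubseteq x_i$ for every $i$, so $y$ is a lower bound of $\{x_i\}$ and therefore $y \sqsubseteq x^{\star}$.

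The only step that requires any thought is deriving monotonicity from $\omega$-cocontinuity on decreasing chains; everything else is a routine manipulation of infima in a complete lattice. No step is really an obstacle, as this is a classical result; the only care needed is matching the precise hypotheses given in the statement (cocontinuity on decreasing chains, rather than Scott-continuity on directed sets) so that every invocation of the hypothesis is justified.
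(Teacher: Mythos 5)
Your proof is correct. Note that the paper does not actually prove this statement --- it is imported as a known result with a citation to Sangiorgi's book --- so there is no in-paper argument to compare against; your write-up is the standard dual Kleene-iteration proof, and it is complete as stated. The one point worth highlighting is that you correctly noticed the theorem's hypotheses mention only $\omega$-cocontinuity, and you supply the missing monotonicity by applying cocontinuity to the eventually-constant decreasing chain $z, y, y, \ldots$; this is legitimate under the paper's (non-strict) notion of decreasing chain, and without it the inductive steps establishing that $\{f^{(i)}(\top)\}_{i\in\bN}$ is a chain and that any fixpoint $y$ satisfies $y \sqsubseteq f^{(i)}(\top)$ would be unjustified. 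The remaining computations (pushing $f$ through the infimum and observing that discarding the top element of a decreasing chain does not change its infimum) are all sound.
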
 
The theorem above requires the endomap to be $\omega$-cocontinuous. Luckily, it is the case for $\Phi_{\mathcal{M}}$ if we restrict our attention to \textsf{DFA}. To show that, we directly follow the line of reasoning from~\cite[Lemma~5.6]{Bacci:2018:Bisimilarity} generalising the similar line of reasoning for $\omega$-continuity from \cite[Theorem~1]{Breugel:2012:On}. First, using  \Cref{lem:chain_pointwise_inf} we show that decreasing chains of pseudometrics over a finite set converge to their infimum. That result is a minor re-adaptation of~\cite[Theorem~1]{Breugel:2012:On} implicitly used in \cite[Lemma~5.6]{Bacci:2018:Bisimilarity}.
\begin{restatable}{lemma}{chainconv}\label{lem:chain_conv_to_inf}
    Let $\{d_i\}_{i \in \bN}$ be an infinite descending chain in the lattice $(D_X, \sqsubseteq)$, where $X$ is a finite set. The sequence $\{d_i\}_{i \in \bN}$ converges (in the sense of convergence in the Banach space) to $d(x,y) = \inf_{i \in \bN} d_i(x,y)$.
\end{restatable}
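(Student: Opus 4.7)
The plan is to exploit the finiteness of $X$ in order to promote pointwise convergence of the sequence to uniform (sup-norm) convergence in the Banach space of bounded functions $\bar{\mathbb{R}}^{X \times X}$.

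First, I would invoke \Cref{lem:chain_pointwise_inf} to note that the pointwise infimum $d(x,y) = \inf_{i \in \bN} d_i(x,y)$ really is a pseudometric and coincides with the infimum of $\{d_i\}_{i \in \bN}$ in the lattice $(D_X, \sqsubseteq)$, so the target object $d$ is well-defined and of the right type. This step is essentially already done for us.

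Next, I would fix an arbitrary pair $(x,y) \in X \times X$ and observe that $\{d_i(x,y)\}_{i \in \bN}$ is a monotonically decreasing sequence of nonnegative reals (bounded below by $0$, since each $d_i$ is a pseudometric). By the monotone convergence theorem for real sequences, it converges to its infimum $d(x,y)$. So far this only gives pointwise convergence.

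The key step is then to upgrade this to convergence in sup-norm using finiteness. Given $\e > 0$, for each pair $(x,y) \in X \times X$ choose $N_{x,y} \in \bN$ such that $|d_i(x,y) - d(x,y)| < \e$ whenever $i \geq N_{x,y}$. Because $X \times X$ is finite, the quantity $N = \max \{N_{x,y} \mid (x,y) \in X \times X\}$ exists and is finite. Then for all $i \geq N$, simultaneously across all pairs we have $\|d_i - d\| = \sup_{x,y \in X} |d_i(x,y) - d(x,y)| < \e$, giving convergence in the sup-norm as required.

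There is essentially no obstacle beyond locating the step where finiteness is used: without $|X| < \infty$ one cannot take a maximum over $X \times X$, and pointwise convergence would not imply uniform convergence in general. The whole argument is otherwise a routine application of the monotone convergence theorem combined with the pointwise characterisation of the lattice-theoretic infimum from \Cref{lem:chain_pointwise_inf}.
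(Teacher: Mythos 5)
Your proof is correct and follows essentially the same route as the paper's: pointwise convergence of each decreasing bounded sequence $\{d_i(x,y)\}_{i\in\bN}$ to its infimum, upgraded to sup-norm convergence by taking the maximum of the indices $N_{x,y}$ over the finite set $X \times X$. The only addition is your explicit appeal to \Cref{lem:chain_pointwise_inf} to confirm that $d$ is itself a pseudometric, which the paper leaves implicit here.
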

\begin{proof}
     Let $\e > 0$ and let $x, y \in X$. Since $d(x,y) = \inf_{i \in \bN} d_i(x,y)$ 
    there exists an index $m_{x,y} \in \bN$ such that for all $n \geq m_{x,y}$, $|d_n(x,y) - d(x,y)| < \e$. Now, let $N = \max \{m_{x,y} \mid x, y \in X\}$. This is well-defined because $X$ is finite. Therefore, for all $n \geq N$ and $x,y \in X$, $| d_n(x,y) - d(x,y)|< \e$ and hence $\| d_n - d\| < \e$.
\end{proof}
We can now use the above to show the desired property, by re-adapting \cite[Theorem~1]{Breugel:2012:On}. 
\begin{restatable}{lemma}{cocontinuous}\label{lem: cocontinuous} If $\mathcal{M}$ is a deterministic finite automaton, then $\Phi_{\mathcal{M}}$ is $\omega$-cocontinuous. 
\end{restatable}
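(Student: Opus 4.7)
The plan is to prove $\omega$-cocontinuity by bridging between the order-theoretic notion of infimum of a descending chain and the Banach-space notion of convergence in the supremum norm, using the finiteness of $M$ as the crucial link. Concretely, I would fix a descending chain $\{d_i\}_{i \in \bN}$ in $D_M$, let $d = \inf_{i \in \bN} d_i$, and aim to establish the equality $\Phi_{\mathcal{M}}(d) = \inf_{i \in \bN} \Phi_{\mathcal{M}}(d_i)$.

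First, by monotonicity of $\Phi_{\mathcal{M}}$ (the very property used in \Cref{sec: behavioural_distance} to invoke Knaster--Tarski), the image sequence $\{\Phi_{\mathcal{M}}(d_i)\}_{i \in \bN}$ is itself a descending chain in $D_M$. By \Cref{lem:chain_pointwise_inf}, both infima $d$ and $\inf_{i} \Phi_{\mathcal{M}}(d_i)$ can be computed pointwise. Since $M$ is finite, \Cref{lem:chain_conv_to_inf} applies to both chains, giving convergence in the supremum norm: $d_i \to d$ and $\Phi_{\mathcal{M}}(d_i) \to \inf_{i} \Phi_{\mathcal{M}}(d_i)$.

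Now I would use \Cref{lem:operator_nonexpansive}, which states that $\Phi_{\mathcal{M}}$ is nonexpansive with respect to the supremum norm, to transport convergence through $\Phi_{\mathcal{M}}$: from $\|d_i - d\| \to 0$ we conclude $\|\Phi_{\mathcal{M}}(d_i) - \Phi_{\mathcal{M}}(d)\| \leq \|d_i - d\| \to 0$, hence $\Phi_{\mathcal{M}}(d_i) \to \Phi_{\mathcal{M}}(d)$ in the supremum norm as well. Because limits in the Banach space $\eR^{M \times M}$ are unique, the two limits must coincide, yielding $\Phi_{\mathcal{M}}(d) = \inf_{i} \Phi_{\mathcal{M}}(d_i)$, which is exactly the $\omega$-cocontinuity property.

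There is no serious obstacle here: the proof is essentially a routine assembly of the two preceding lemmas, and the only subtle point to flag is that finiteness of $M$ is used precisely to promote pointwise convergence of a descending chain of bounded pseudometrics to uniform (supremum-norm) convergence via \Cref{lem:chain_conv_to_inf}. Without this finiteness, \Cref{lem:operator_nonexpansive} alone would not suffice to transfer infima through $\Phi_{\mathcal{M}}$, which explains why the restriction to \textsf{DFA} is essential for applying Kleene's theorem in the subsequent completeness argument.
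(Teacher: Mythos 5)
Your proposal is correct and follows essentially the same route as the paper's own proof: both use monotonicity to make $\{\Phi_{\mathcal{M}}(d_i)\}_{i \in \bN}$ a descending chain, invoke \Cref{lem:chain_pointwise_inf} and \Cref{lem:chain_conv_to_inf} (via finiteness of $M$) to get supremum-norm convergence of both chains to their pointwise infima, transport convergence through $\Phi_{\mathcal{M}}$ by nonexpansivity (\Cref{lem:operator_nonexpansive}), and conclude by uniqueness of limits. No gaps to report.
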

\begin{proof}
 By \cref{lem:chain_conv_to_inf}, the chain $\{d_i\}_{i \in \bN}$ converges to $\inf_{i \in \bN} d_i$. Since $\Phi_{\mathcal{M}}$ is nonexpansive (\cref{lem:operator_nonexpansive}) it is also continuous (in the sense of the Banach space continuity) and therefore $\{\Phi_{\mathcal{M}}(d_i)\}_{i \in \bN}$ converges to $\Phi_{\mathcal{M}} \left(\inf_{i \in \bN} d_i\right)$. Recall that $\Phi_{\mathcal{M}}$ is monotone, which makes $\{\Phi_{\mathcal{M}}(d_i)\}_{i \in \bN}$ into a chain, which by \cref{lem:chain_pointwise_inf} and \cref{lem:chain_conv_to_inf} converges to $\inf_{i \in \bN} \{\Phi_{\mathcal{M}} (d_i)\}$. Since limit points are unique, $\inf_{i \in \bN} \{\Phi_{\mathcal{M}} (d_i)\} = \Phi_{\mathcal{M}} \left(\inf_{i \in \bN} d_i\right)$.
\end{proof}
We can combine the preceding results and provide a straightforward characterisation of the distance between languages represented by arbitrary regular expressions, denoted as $e, f \in \Exp$. Utilising a simple argument based on \Cref{prop:homomorphisms_isometries}, which asserts that automata homomorphisms are nonexpansive, one can demonstrate that the distance between $\sem{e}$ and $\sem{f}$ in the final automaton is equivalent to the distance between $[e]_{\acirel}$ and $[f]_{\acirel}$ in $\gen{[e]_{\acirel},[f]_{\acirel}}{\mathcal{Q}}$. This is the least subautomaton of $\mathcal{Q}$ that contains the derivatives (modulo $\acirel$) reachable from $[e]_{\acirel}$ and $[f]_{\acirel}$. Importantly, this automaton is finite (\Cref{lem:locally_finite}), allowing us to apply the Kleene fixpoint theorem to calculate the distance.

Let ${\Psi}_{e,f}^{(0)}$ denote the discrete metric on the set $\gen{[e]{\acirel},[f]{\acirel}}{\mathcal{Q}}$ (the top element of the lattice of pseudometrics over that set). Define ${\Psi}_{e,f}^{(n+1)} = \Phi_{\gen{[e]{\acirel},[f]{\acirel}}{\mathcal{Q}}} ({\Psi}_{e,f}^{(n)})$. Additionally, leveraging the fact that infima of decreasing chains are calculated pointwise (\Cref{lem:chain_pointwise_inf}), we can conclude with the following:
\begin{restatable}{lemma}{iterativecalc}\label{lem:iterative_calculation}
For all $e,f \in \Exp$, the underlying pseudometric of the quantitative algebra $\B$ can be given by $d^{\B}(e,f) = \inf_{i \in \bN} \left\{ {\Psi}_{e,f}^{(i)}\left([e]_{\acirel}, [f]_{\acirel}\right)\right\} $
\end{restatable}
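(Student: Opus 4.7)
The plan is to chain together the results developed earlier in the section, reducing the computation of $d^{\B}(e,f)$ from the infinite-state final automaton $\mathcal{L}$ down to the finite automaton $\gen{[e]_{\acirel},[f]_{\acirel}}{\mathcal{Q}}$, and then applying the Kleene fixpoint theorem to obtain the infimum characterisation.

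First I would unfold the definition $d^{\B}(e,f) = d_{\mathcal{L}}(\sem{e}, \sem{f})$ and reduce the distance in the infinite automaton $\mathcal{L}$ to a distance inside a finite automaton. Concretely, the language-assigning map from the finite subautomaton $\gen{[e]_{\acirel},[f]_{\acirel}}{\mathcal{Q}}$ into $\mathcal{L}$ is a deterministic automaton homomorphism (it is the unique map to the final coalgebra, restricted along the inclusion $\gen{[e]_{\acirel},[f]_{\acirel}}{\mathcal{Q}} \hookrightarrow \mathcal{Q}$ composed with the language map from $\mathcal{Q}$), and it sends $[e]_{\acirel}$ to $\sem{e}$ and $[f]_{\acirel}$ to $\sem{f}$. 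By \Cref{prop:homomorphisms_isometries}, homomorphisms are isometric, so $d_{\mathcal{L}}(\sem{e}, \sem{f}) = d_{\gen{[e]_{\acirel},[f]_{\acirel}}{\mathcal{Q}}}([e]_{\acirel}, [f]_{\acirel})$.

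Next I would establish that the right-hand pseudometric can be computed via Kleene iteration. Write $\mathcal{M} = \gen{[e]_{\acirel},[f]_{\acirel}}{\mathcal{Q}}$ for brevity; this is a deterministic finite automaton by \Cref{lem:locally_finite}. By \Cref{lem: cocontinuous}, $\Phi_{\mathcal{M}}$ is $\omega$-cocontinuous on the complete lattice $(D_{\mathcal{M}}, \sqsubseteq)$, so \Cref{thm: kleene} applies and yields a greatest fixpoint given by $\inf_{i \in \bN}\{\Phi_{\mathcal{M}}^{(i)}(\top_{D_{\mathcal{M}}})\}$, where the top element is the discrete metric $\Psi_{e,f}^{(0)}$. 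By \Cref{lem: uniquefp}, $\Phi_{\mathcal{M}}$ has a unique fixed point, so the greatest and least fixpoints coincide, meaning that the behavioural pseudometric $d_{\mathcal{M}}$, defined as the least fixpoint, equals $\inf_{i \in \bN} \Psi_{e,f}^{(i)}$ in the lattice $(D_{\mathcal{M}}, \sqsubseteq)$.

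Finally, I would pass from the lattice infimum to the pointwise infimum. The sequence $\{\Psi_{e,f}^{(i)}\}_{i \in \bN}$ is a descending chain (by monotonicity of $\Phi_{\mathcal{M}}$ applied to $\Psi_{e,f}^{(1)} \sqsubseteq \Psi_{e,f}^{(0)}$, the latter being immediate since $\Psi_{e,f}^{(0)}$ is the top element), so \Cref{lem:chain_pointwise_inf} gives $\left(\inf_{i \in \bN} \Psi_{e,f}^{(i)}\right)([e]_{\acirel}, [f]_{\acirel}) = \inf_{i \in \bN}\{\Psi_{e,f}^{(i)}([e]_{\acirel}, [f]_{\acirel})\}$. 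Chaining the three equalities yields the claim. I expect the only mildly subtle step to be the initial reduction: one must verify that the relevant composite map really is a homomorphism of deterministic automata into $\mathcal{L}$ sending the classes $[e]_{\acirel}$ and $[f]_{\acirel}$ to $\sem{e}$ and $\sem{f}$ respectively — but this is essentially \Cref{lem:adequacy} together with finality of $\mathcal{L}$, and no new machinery beyond what has already been developed is needed.
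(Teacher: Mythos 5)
Your proposal is correct and follows essentially the same route as the paper's proof: reduce $d^{\B}(e,f)$ to the distance in the finite subautomaton $\gen{[e]_{\acirel},[f]_{\acirel}}{\mathcal{Q}}$ via \Cref{prop:homomorphisms_isometries}, then invoke \Cref{lem:locally_finite}, \Cref{lem: uniquefp}, \Cref{lem: cocontinuous} and \Cref{thm: kleene}. Your explicit final appeal to \Cref{lem:chain_pointwise_inf} to pass from the lattice infimum to the pointwise infimum is a point the paper handles in the surrounding text rather than in the proof body, but the substance is identical.
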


In simpler terms, we have demonstrated that the behavioural distance between a pair of arbitrary regular expressions can be calculated as the infimum of decreasing approximations of the actual distance from above. Alternatively, one could calculate the same distance as the supremum of increasing approximations from below using the Kleene fixpoint theorem for the least fixpoint. We chose the former approach because our proof of completeness relies on the \textsf{(Cont)} rule of quantitative deduction systems. This rule essentially states that to prove two terms are at a specific distance, we should be able to prove that for all approximations of that distance from above. This allows us to replicate the fixpoint calculation through axiomatic reasoning.

\subparagraph*{Completeness result. } We start by recalling that regular expressions satisfy a certain decomposition property, stating that each expression can be reconstructed from its small-step semantics, up to $\equiv_0$. This property, often referred to as the fundamental theorem of Kleene Algebra/regular expressions (in analogy with the fundamental theorem of calculus and following the terminology of Rutten~\cite{Rutten:2000:Universal} and Silva~\cite{Silva:2010:Kleene}) is useful in further steps of the proof of completeness.
\begin{restatable}{theorem}{fundamental}{(Fundamental Theorem)}\label{thm:fundamental_theorem}
    For any $e \in \Exp$, $
    \vdash e_i \equiv_0 \sum_{a \in A} a \seq (e_i)_a + o_{\mathcal{R}}(e_i)
    $ is derivable using the axioms of \textsf{REG}.
    \end{restatable}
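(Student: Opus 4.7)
The theorem is a purely equational statement (all judgements at distance $0$), so the quantitative machinery collapses to ordinary equational reasoning in \textsf{REG}. The plan is to induct on the structure of $e \in \Exp$, using the zero-distance fragment of the axioms exactly as one would in Salomaa-style proofs of the fundamental theorem of regular expressions (see Silva~\cite{Silva:2010:Kleene}).

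\textbf{Base cases.} For $e = 0$: the RHS equals $\sum_{a \in A} a \seq 0 + 0$, which reduces to $0$ by $(\mathsf{S0})$ applied in each summand, then $(\mathsf{SL1})$--$(\mathsf{SL4})$ to collapse the sum. For $e = 1$: analogously, $\sum_a a \seq 0 + 1 \equiv_0 0 + 1 \equiv_0 1$. For $e = a' \in A$: one summand is $a' \seq 1 \equiv_0 a'$ via $(\mathsf{S1})$, all others become $0$ via $(\mathsf{S0})$, and adding $o_{\mathcal{R}}(a') = 0$ leaves $a'$.

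\textbf{Inductive cases.} For $e = e_1 + e_2$, use the IH for both subterms, add the two derived equalities using $(\mathsf{SL5})$ at distance $0$, then regroup via $(\mathsf{SL1})$--$(\mathsf{SL4})$ and fuse $a \seq (e_1)_a + a \seq (e_2)_a \equiv_0 a \seq ((e_1)_a + (e_2)_a) = a \seq (e_1 + e_2)_a$ using $(\mathsf{D1})$; the output bit is handled by $1 + 1 \equiv_0 1$, i.e.\ $(\mathsf{SL1})$.

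For $e = e_1 \seq e_2$, apply the IH to $e_1$, multiply on the right by $e_2$, and use $(\mathsf{D2})$ and associativity $(\mathsf{S})$ to obtain $e_1 \seq e_2 \equiv_0 \sum_a a \seq ((e_1)_a \seq e_2) + o_{\mathcal{R}}(e_1) \seq e_2$. Split on $o_{\mathcal{R}}(e_1)$: if $0$, then $o_{\mathcal{R}}(e_1) \seq e_2 \equiv_0 0$ by $(\mathsf{0S})$ and the result matches directly; if $1$, then $o_{\mathcal{R}}(e_1) \seq e_2 \equiv_0 e_2$ by $(\mathsf{1S})$, at which point the IH on $e_2$ rewrites this copy of $e_2$ into $\sum_a a \seq (e_2)_a + o_{\mathcal{R}}(e_2)$, after which $(\mathsf{D1})$ fuses matching $a$-prefixes to reconstruct $(e_1 \seq e_2)_a = (e_1)_a \seq e_2 + o_{\mathcal{R}}(e_1) \seq (e_2)_a$.

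\textbf{The star case, which I expect to be the main obstacle.} For $e = e_1^*$, $(\mathsf{Unroll})$ gives $e_1^* \equiv_0 e_1 \seq e_1^* + 1$. Applying the IH to $e_1$ and distributing with $(\mathsf{D2})$, $(\mathsf{S})$ yields $e_1^* \equiv_0 \sum_a a \seq ((e_1)_a \seq e_1^*) + o_{\mathcal{R}}(e_1) \seq e_1^* + 1 = \sum_a a \seq (e_1^*)_a + o_{\mathcal{R}}(e_1) \seq e_1^* + 1$. If $o_{\mathcal{R}}(e_1) = 0$, then $(\mathsf{0S})$ and $(\mathsf{SL4})$ finish. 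The delicate subcase is $o_{\mathcal{R}}(e_1) = 1$, because $(\mathsf{1S})$ leaves an occurrence of $e_1^*$ on the right-hand side. Here I would invoke $(\mathsf{Tight})$: writing $e' := \sum_a a \seq (e_1)_a$, the IH gives $e_1 \equiv_0 e' + 1$, hence by $(\Nexp)$ and $(\mathsf{Tight})$ we get $e_1^* \equiv_0 (e' + 1)^* \equiv_0 (e')^*$. Now $(\mathsf{Unroll})$ applied to $(e')^*$, together with $(\mathsf{D2})$, $(\mathsf{S})$, and congruence to replace $(e')^*$ back by $e_1^*$, produces $e_1^* \equiv_0 \sum_a a \seq ((e_1)_a \seq e_1^*) + 1 = \sum_a a \seq (e_1^*)_a + o_{\mathcal{R}}(e_1^*)$, as required.

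Throughout, each step uses only axioms of \textsf{REG} at distance $0$ together with $(\Refl)$, $(\Symm)$, $(\Triang)$ at $0$, and $(\Nexp)$ with $\varepsilon = 0$ (i.e.\ ordinary congruence). No infinitary reasoning, $(\Cont)$, or $(\dPref)$ is needed here, since the claim is zero-distance and matches the shape of the classical fundamental theorem for regular expressions.
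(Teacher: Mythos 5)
Your proposal is correct and follows essentially the same route as the paper's proof: structural induction using only the zero-distance fragment, with $(\mathsf{D1})$/$(\mathsf{D2})$ to fuse prefixes, a case split on the output bit in the sequential case, and $(\mathsf{Tight})$ to eliminate the $o_{\mathcal{R}}(e_1)=1$ obstruction in the star case before applying $(\mathsf{Unroll})$ and substituting back via congruence. The only cosmetic difference is that the paper applies $(\mathsf{Tight})$ (or $(\mathsf{SL4})$) inside the star \emph{before} unrolling rather than backtracking to it in the accepting subcase, which is the same idea in a different order.
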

    The theorem above makes use of the $n$-ary generalised sum operator, which is well defined because of \textsf{(SL1-SL4)} axioms of \textsf{REG}. Let's now say that we are interested in the distance between some expressions $e,f \in \Exp$. As mentioned before, we will rely on $\gen{[e]_{\acirel},[f]_{\acirel}}{\mathcal{Q}}$, the least subautomaton of the $\acirel$ quotient of the Brzozowski automaton containing states reachable from $[e]_{\acirel}$ and $[f]_{\acirel}$. Recall that by \Cref{lem:locally_finite} its state space is finite. It turns out that the approximations from above (from \Cref{lem:iterative_calculation}) to the distance between any pair of states in that automaton can be derived through the means of axiomatic reasoning.
\begin{lemma}\label{lem:provability1}
	Let $e,f \in \Exp$ be arbitrary regular expressions and let $[g]_{\acirel}, [h]_{\acirel} \in \gen{[e]_{\acirel},[f]_{\acirel}}{\mathcal{Q}}$. 
       For all $i \in \bN$, and $\e \geq {\Psi}_{e,f}^{(i)}\left([g]_{\acirel}, [h]_{\acirel}\right)$, one can derive $\vdash g \equiv_{\e} h$ using the axioms of $\mathsf{REG}$.
\end{lemma}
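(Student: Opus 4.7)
The plan is to proceed by induction on $i$, with $e,f$ fixed and $[g]_\acirel, [h]_\acirel$ ranging over the (finitely many) states of $\gen{[e]_\acirel,[f]_\acirel}{\mathcal{Q}}$.

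For the base case $i=0$, recall that ${\Psi}_{e,f}^{(0)}$ is the discrete metric on the state space. If $[g]_\acirel \neq [h]_\acirel$, then $\e \geq 1$, so $(\Top)$ together with $(\Max)$ yields $\vdash g \equiv_\e h$. If $[g]_\acirel = [h]_\acirel$, i.e., $g \acirel h$, I would first prove the auxiliary claim that $\vdash g \equiv_0 h$ holds for \emph{any} $g \acirel h$, by an inner induction on the inductive construction of $\acirel$, using $(\mathsf{SL1})$, $(\mathsf{SL2})$, $(\mathsf{SL3})$ together with $(\Refl)$, $(\Symm)$, $(\Triang)$, and $(\Nexp)$, the last of which is needed because $\acirel$ is a congruence. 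Then $(\Max)$ handles any $\e \geq 0$.

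For the inductive step, suppose the result holds at $i$, and take $\e \geq {\Psi}_{e,f}^{(i+1)}([g]_\acirel, [h]_\acirel)$. By definition of $\Phi$, this quantity equals $\max\{d_2(o_{\mathcal{R}}(g), o_{\mathcal{R}}(h)),\, \lambda \cdot \max_{a \in A} {\Psi}_{e,f}^{(i)}([g_a]_\acirel, [h_a]_\acirel)\}$. If $o_{\mathcal{R}}(g) \neq o_{\mathcal{R}}(h)$ the bound is $1$, so $(\Top)$ and $(\Max)$ close the case. Otherwise $o_{\mathcal{R}}(g) = o_{\mathcal{R}}(h)$; set $\delta = \max_{a \in A} {\Psi}_{e,f}^{(i)}([g_a]_\acirel, [h_a]_\acirel)$, so $\e \geq \lambda \delta$. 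The induction hypothesis, applied to the derivatives $g_a, h_a$ (which lie in $\gen{[e]_\acirel,[f]_\acirel}{\mathcal{Q}}$ by reachability), combined with $(\Max)$, gives $\vdash g_a \equiv_\delta h_a$ for each $a \in A$. Applying $(\dPref)$ yields $\vdash a \seq g_a \equiv_{\lambda \delta} a \seq h_a$, and iterated use of $(\mathsf{SL5})$, with $(\Refl)$ contributing the common output $o_{\mathcal{R}}(g) = o_{\mathcal{R}}(h)$, produces $\vdash \sum_{a \in A} a \seq g_a + o_{\mathcal{R}}(g) \equiv_{\lambda \delta} \sum_{a \in A} a \seq h_a + o_{\mathcal{R}}(h)$. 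Finally, the Fundamental Theorem (\Cref{thm:fundamental_theorem}) applied to both sides, together with $(\Symm)$ and $(\Triang)$, transports this to $\vdash g \equiv_{\lambda \delta} h$, after which $(\Max)$ takes care of any $\e > \lambda \delta$.

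The main obstacle I anticipate is the base case's treatment of ACI-equivalence: one must carefully verify that $(\mathsf{SL1})$--$(\mathsf{SL3})$, together with the structural deduction rules, suffice to derive $\vdash g \equiv_0 h$ whenever $g \acirel h$. This is routine but fiddly, since it is an induction on the generation of the congruence closure and must be separated out as a lemma before the main induction can even start. Once that auxiliary result is in place, the inductive step above is essentially a symbolic mirror of one application of $\Phi_{\gen{[e]_\acirel,[f]_\acirel}{\mathcal{Q}}}$, with the Fundamental Theorem serving as the bridge between state-level fixpoint iteration and term-level axiomatic reasoning.
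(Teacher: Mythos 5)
Your overall architecture (outer induction on $i$, base case via $(\Top)$/ACI, inductive step mirroring one application of $\Phi$ and closed off by the Fundamental Theorem, $(\mathsf{SL5})$ and $(\Triang)$) is exactly the paper's, and your worry about the base case is misplaced: deriving $\vdash g \equiv_0 h$ from $g \acirel h$ is indeed routine and the paper simply asserts that $\acirel$ is contained in the distance-zero fragment of $\mathsf{REG}$. The genuine gap is in the inductive step. You set $\delta = \max_{a\in A}\Psi_{e,f}^{(i)}([g_a]_\acirel,[h_a]_\acirel)$ and then write the judgements $\vdash g_a \equiv_{\delta} h_a$ and $\vdash a\seq g_a \equiv_{\lambda\delta} a\seq h_a$. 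But a quantitative equation $t \equiv_\e s$ is only well-formed for $\e \in \bQ$ (a nonnegative \emph{rational}), and neither $\delta$ nor $\lambda\delta$ is guaranteed to be rational: the approximants $\Psi^{(i)}$ take values among powers of $\lambda \in \left]0,1\right[$, which is an arbitrary real, and even if $\lambda$ were rational the relevant threshold $\e\cdot\lambda^{-1}$ obtained from $\e \geq \lambda\cdot\max_a\Psi^{(i)}(\cdots)$ need not be. So the induction hypothesis cannot be invoked at index $\delta$, and $(\dPref)$ cannot be applied to produce an equation indexed by $\lambda\delta$; these steps are not legal derivations in the system.

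The paper's proof spends most of its effort precisely on this point. To derive $\vdash a\seq g_a \equiv_\e a\seq h_a$ it fixes an arbitrary rational $\e' > \e$, chooses a decreasing sequence of rationals $r_n \to \lambda^{-1}$ and an index $N$ with $\e' \geq \lambda\cdot\e\cdot r_N$, applies the induction hypothesis at the \emph{rational} index $\e\cdot r_N \geq \e\cdot\lambda^{-1} \geq \Psi^{(j)}([g_a]_\acirel,[h_a]_\acirel)$, uses $(\dPref)$ to reach $\e'$, and then, since $\e' > \e$ was arbitrary, invokes the infinitary $(\Cont)$ rule to descend to $\e$ itself. Your proposal never uses $(\Cont)$ in this lemma, and without it (or some equivalent rational-approximation device) the inductive step does not go through. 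The remainder of your argument -- $(\mathsf{SL5})$ to assemble the sums, $(\Refl)$ for the common output term, and the Fundamental Theorem with $(\Triang)$ to return to $g$ and $h$ -- matches the paper once this repair is made.
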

\begin{proof}

We proceed by induction on $i$. For the base case, observe that ${\Psi}_{e,f}^{(0)}$ is the discrete $1$-pseudometric on the set $\gen{[e]_{\acirel},[f]_{\acirel}}{\mathcal{Q}}$ such that ${\Psi}_{e,f}^{(0)}([g]_{\acirel}, [h]_{\acirel})=0$ if and only if $g \acirel h$, or otherwise ${\Psi}_{e,f}^{(0)}([g]_{\acirel}, [h]_{\acirel})=1$. In the first case, we immediately have that $g \equiv_0 h$, because $\acirel$ is contained in distance zero axioms of \textsf{REG}. In the latter case, we can just use $(\Top)$, to show that $g \equiv_1 h $. Then, in both cases, we can apply $(\Max)$ to obtain $\vdash g \equiv_{\e} h$, since $\e \geq {\Psi}_{e,f}^{(0)}([g]_{\acirel}, [h]_{\acirel})$.
	For the induction step, let $i = j + 1$ and derive the following:
\begin{align*}
    &\e \geq {\Psi}_{e,f}^{(j + 1)}([g]_{\acirel}, [h]_{\acirel})
    \iff \e \geq \Phi_{\gen{[e]_{\acirel},[f]_{\acirel}}{\mathcal{Q}}} \left( {\Psi}_{e,f}^{(j)} \right)\left([g]_{\acirel}, [h]_{\acirel}\right) \tag{Def. of ${\Psi}_{e,f}^{j+1}$}\\
    \iff &\e \geq \max\left\{ d_{2} (o_{\mathcal{Q}}({[g]}_{\acirel}), o_{\mathcal{Q}}([h]_{\acirel})), \lambda \cdot \max_{a \in A} \left\{ {\Psi}_{e,f}^{(j)}\left({[g]_{\acirel}}_a, {[h]_{\acirel}}_{a}\right) \right\}\right\} \tag{Def. of $\Phi$}\\
    \iff &\e \geq \max\left\{ d_{2} \left(o_{\mathcal{R}}(g), o_{\mathcal{R}}(h)\right), \lambda \cdot \max_{a \in A} \left\{ \Phi_{\gen{[e]_{\acirel},[f]_{\acirel}}{\mathcal{Q}}}^{(j)}\left({[{(g)}_{a}]_{\acirel}}, {[{(h)}_{a}]_{\acirel}}\right) \right\}\right\} \tag{Def. of $\mathcal{Q}$}\\
    \iff &{\e \geq d_{2} (o_{\mathcal{R}}(g), o_{\mathcal{R}}(h))} \text{ and for all $a \in A$},~\e \cdot \lambda^{-1} \geq {\Psi}_{e,f}^{(j)}\left({[{(g)}_{a}]_{\acirel}}, {[{(h)}_{a}]_{\acirel}}\right)
\end{align*}
Firstly, since $d_2$ is the discrete $1$-pseudometric on the set $\{0,1\}$, we can use $(\Refl)$ or $(\Top)$ depending on whether $o_{\mathcal{R}}(g) = o_{\mathcal{R}}(h)$ and then apply $(\Max)$ to derive $\vdash o_{\mathcal{R}}(g) \equiv_\e o_{\mathcal{R}}(h)$. 

Let $a \in A$. We will show that $\vdash a\seq (g)_a \equiv_{\e} a \seq (h)_a $. Since $\e \cdot \lambda^{-1}$ is not guaranteed to be rational, we cannot immediately apply the induction hypothesis. Instead, we rely on \textsf{(Cont)} rule. First, pick an arbitrary rational $\e'$ strictly greater than $\e$ and fix $\{r_n\}_{n \in \bN}$ to be any decreasing sequence of rationals that converges to $\lambda^{-1}$. Let $r_N$ be an element of that sequence such that $\e' \geq \lambda \cdot \e \cdot r_N$. It is always possible to pick such element because $\{\lambda \cdot r_n\}_{n \in \bN}$ is a decreasing sequence that converges to $1$ and $\e' > \e$. Since $\epsilon \cdot r_N \geq \epsilon \cdot \lambda^{-1}$ and $\e \cdot r_{N} \in \bQ$, we can use the induction hypothesis and derive $\vdash (g)_a \equiv_{\epsilon \cdot r_{N}} (h_a)$. Then, by $(\dPref)$ axiom we have that $\vdash a \seq (g)_a \equiv_{\e'} a \seq (h)_a$. Since we have shown it for arbitrary $\e' > \e$, by \textsf{(Cont)} rule we have that $\vdash a\seq (g)_a \equiv_{\e} a \seq (h)_a $. Using \textsf{(SL5)}, we can combine all subexpressions involving the output and transition derivatives into the following:
	$$
	\vdash \sum_{a \in A} a \seq (g)_a + o_{\mathcal{R}}(g) \equiv_{\e} \sum_{a \in A} a \seq (h)_a + o_{\mathcal{R}}(h)
	$$
	Since both sides are normal forms of $g$ and $h$ existing because of \Cref{thm:fundamental_theorem}, we can apply \textsf{(Triang)} on both sides and obtain $\vdash g \equiv_{\e} h$ thus completing the proof.
\end{proof}
At this point, we have done all the hard work, and establishing completeness involves a straightforward argument that utilises the \textsf{(Cont)} rule and the lemma above.
\begin{theorem}[Completeness]
       For any $e, f \in \Exp$ and $\e \in \bQ$, if $\models_\B e \equiv_\e f$, then $\vdash e \equiv_\e f \in \mathsf{REG}$
\end{theorem}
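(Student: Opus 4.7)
The plan is to reduce the semantic statement to a family of finite-level approximations, derive each approximation axiomatically via \Cref{lem:provability1}, and then collapse them using the \textsf{(Cont)} rule. Concretely, I would first unfold the hypothesis $\models_\B e \equiv_\e f$ to the inequality $d^\B(e,f) \leq \e$, and then invoke \Cref{lem:iterative_calculation} to rewrite this as
$$ \inf_{i \in \bN} \Psi_{e,f}^{(i)}\bigl([e]_\acirel, [f]_\acirel\bigr) \;\leq\; \e. $$

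Next, I would fix an arbitrary rational $\e' \in \bQ$ with $\e' > \e$. Since the infimum is bounded by $\e < \e'$, there exists some index $n \in \bN$ such that $\Psi_{e,f}^{(n)}([e]_\acirel, [f]_\acirel) \leq \e'$. Observing that $[e]_\acirel$ and $[f]_\acirel$ trivially belong to $\gen{[e]_\acirel, [f]_\acirel}{\mathcal{Q}}$, \Cref{lem:provability1} applies with $g := e$, $h := f$, $i := n$ and yields $\vdash e \equiv_{\e'} f$ using only the axioms of $\mathsf{REG}$.

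Since $\e' > \e$ was arbitrary, we have the derivable family $\{\vdash e \equiv_{\e'} f \mid \e' \in \bQ,\; \e' > \e\}$. The \textsf{(Cont)} rule gives $\{e \equiv_{\e'} f \mid \e' > \e\} \vdash e \equiv_\e f$, and a routine use of \textsf{(Cut)} to eliminate each assumption $e \equiv_{\e'} f$ (which is itself derivable from $\emptyset$) then produces the desired $\vdash e \equiv_\e f \in \mathsf{REG}$.

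I do not expect any genuine obstacle at this stage: the conceptual work has been absorbed into \Cref{lem:iterative_calculation} (the finite-state reduction plus Kleene-style approximation from above) and \Cref{lem:provability1} (the inductive derivability of each approximation). The only point requiring a little care is the density argument selecting a rational $\e' \in \bQ$ strictly greater than $\e$ together with an index $n$ witnessing $\Psi_{e,f}^{(n)}([e]_\acirel, [f]_\acirel) \leq \e'$, but this is immediate from the definition of infimum together with the density of $\bQ$ in $\mathbb{R}$.
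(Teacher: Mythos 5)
Your proposal is correct and follows essentially the same route as the paper's own proof: unfold the hypothesis to $d^\B(e,f)\leq\e$, use \Cref{lem:iterative_calculation} to pass to the descending chain of approximants, pick for each rational $\e'>\e$ an index $n$ with $\Psi_{e,f}^{(n)}([e]_\acirel,[f]_\acirel)\leq\e'$ (the paper establishes this witness by a short contradiction argument rather than directly from the definition of infimum, but the content is identical), apply \Cref{lem:provability1}, and close with \textsf{(Cont)}. No gaps.
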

\begin{proof}
    Assume that $ \models_\B e \equiv_\e f $, which by the definition of $\models_{\B}$ is equivalent to $d^{\B}(e,f) \leq \e$. In order to use \textsf{(Cont)} axiom to derive $\vdash e \equiv_\e f$, we need to be able to show $\vdash e \equiv_{\e'} f$ for all $\e' > \e$. Because of iterative characterisation of $d^{\B}$ from \Cref{lem:iterative_calculation}, we have that $\inf_{i \in I} \{{\Psi}_{e,f}^{(i)}([e]_{\acirel},[f]_{\acirel})\} < \e'$. Since $\e'$ is strictly above the infimum of the descending chain of approximants, there exists a point $i \in \bN$, such that $\e' > {\Psi}_{e,f}^{(i)} \left([e]_{\acirel}, [f]_{\acirel}\right)$. We can show this by contradiction. 
    
    Assume that for all $i \in \bN$, $\e' \leq {\Psi}_{e,f}^{(i)}\left([e]_{\acirel}, [f]_{\acirel}\right)$. 
     This would make $\e'$ into the lower bound of the chain $\left\{{\Psi}_{e,f}^{(i)}([e]_{\acirel}, [f]_{\acirel})\right\}_{i \in \bN}$ and in such a case $\e'$ would be less than or equal to the infimum of that chain, which by assumption is less than or equal to $\e$. By transitivity, we could obtain $\e' \leq \e$. Since $\e' > \e$, by antisymmetry we could derive that $\e'= \e$, which would lead to the contradiction. 

    Using the fact shown above, we can use \cref{lem:provability1} to obtain $\vdash e \equiv_{\e'} f \in \mathsf{REG}$ for any $\e' > \e$, which completes the proof.
\end{proof}
In simpler terms, the \textsf{(Cont)} rule enables us to demonstrate that two terms are at a specific distance by examining all strict overapproximations of that distance. Due to the iterative nature outlined in \Cref{lem:iterative_calculation}, this implies that we only need to consider finite approximants used in the Kleene fixpoint theorem. Each of those finite approximants can be axiomatically derived using \Cref{lem:provability1}.
\section{Discussion}\label{sec: conclusion}
We have presented a sound and complete axiomatisation of the shortest-distinguishing word distance between languages representing regular expressions through a quantitative analogue of equational logic~\cite{Mardare:2016:Quantitative}. Before our paper, only axiomatised behavioural distances of probabilistic/weighted transition systems existed, through (variants of) the Kantorovich/Wasserstein lifting~\cite{Larsen:2011:Metrics,DArgenio:2014:Axiomatizing,Bacci:2018:Algebraic,Bacci:2018:Bisimilarity,Bacci:2018:TV}, while we looked at a behavioural distance obtained through a more general coalgebraic framework~\cite{Baldan:2018:Coalgebraic}. Outside of the coalgebra community, the shortest-distinguishing word distance and its variants also appear in the model checking~\cite{Kwiatkowska:1990:Metric} and in the automata learning~\cite{Ferreira:2022:Tree} literature.

We have followed the strategy for proving completeness from~\cite{Bacci:2018:Bisimilarity}. The interesting insight about that strategy is that it relies on properties that are not exclusive to distances obtained through the Kantorovich/Wasserstein lifting and can be established for notions of behavioural distance for other kinds of transition systems stemming from the coalgebraic framework. In particular, one needs to show that the monotone map on the lattice of pseudometrics used in defining the distance of finite-state systems is nonexpansive with respect to the sup norm (and hence $\omega$-cocontinuous) and has a unique fixpoint, thus allowing to characterise the behavioural distance as the greatest fixpoint obtained through the Kleene fixpoint theorem. This point of view allows one to reconstruct the fixpoint calculation in terms of axiomatic manipulation involving the \textsf{(Cont)} rule, eventually leading to completeness.

We have additionally observed that in the presence of the infinitary \textsf{(Cont)} rule and the \textsf{(\dPref)} axiom, there is no need for a fixpoint rule, which is common place in all axiomatisations of regular expressions but also in other work on distances. In particular, the previous work on axiomatising a discounted probabilistic bisimilarity distance from~\cite{Bacci:2018:Bisimilarity} includes both \textsf{(\dPref)} and the fixpoint introduction rule, but its proof of completeness~\cite[Theorem~6.4]{Bacci:2018:Bisimilarity} does not involve the fixpoint introduction rule at any point. We are highly confident that in the case of that axiomatisation, the fixpoint introduction rule could be derived from other axioms in a similar fashion to the way we derived Salomaa's rule for introducing the Kleene star~\cite{Salomaa:1966:Two}. Additionally, we are interested in how much this argument relates to the recent study of fixpoints in quantitative equational theories~\cite{Mardare:2021:Fixed}. 

Moreover, the axiomatisations from~\cite{Bacci:2018:TV,Bacci:2018:Bisimilarity} rely on a slight modification of quantitative equational theories, which drop the requirement of all operations from the signature to be nonexpansive. This is dictated by the fact that the interpretation of $\mu$-recursion in Stark and Smolka's probabilistic process algebra~\cite{Stark:2000:Complete} can increase the behavioural distances in the case of unguarded recursion, while in regular expressions recursive behaviour is introduced through Kleene's star, whose interpretation is non-expansive with respect to the shortest-distinguishing-word distance. This allowed us to fit instantly into the original framework of quantitative equational theories. The earlier work~\cite{Bacci:2018:Algebraic} focusing on Markov processes~\cite{Blute:1997:Bisimulations} also relies on quantitative equational theories, but its syntax does not involve any recursive primitives. Instead, the recursive behaviour is introduced through Cauchy completion of a pseudometric induced by the axioms. The earliest works on axiomatising behavioural distances of weighted~\cite{Larsen:2011:Metrics} and probabilistic~\cite{DArgenio:2014:Axiomatizing} transition systems, studied before the introduction of quantitative equational theories, rely on ad-hoc inference systems that cannot be easily generalised. 

The pioneering works~\cite{Desharnais:2014:Metrics,Breugel:2001:Towards,Breugel:2012:On,Breugel:2006:Approximating,Breugel:2017:Probabilistic,Bacci:2019:Converging} laid foundations for behavioural (pseudo)metrics of various flavours of probabilistic transition systems. The coalgebraic point of view~\cite{Baldan:2018:Coalgebraic} allowed to generalise these ideas to a wide range of transition systems by moving from the Kantorovich/Wasserstein lifting to the abstract setting of lifting endofunctors from the category of sets to the category of pseudometric spaces. Building upon this theory, further lines of work were dedicated to asymmetric distances (called hemimetrics) through the theory of quasi-lax liftings~\cite{Wild:2022:Characteristic}, fuzzy analogues of Hennessy-Milner logic characterising behavioural distance~\cite{Goncharov:2023:Kantorovich,Beohar:2023:Hennessy}, fibrational generalisations involving quantale-enriched categories~\cite{Beohar:2023:Expressive}, up-to techniques allowing for efficient approximation of behavioural distances~\cite{Bonchi:2018:UpTo} and quantitative analogues of van Glabbek's linear-time branching-time spectrum~\cite{Forster:2023:Quantitative}.

In this paper, we have focused on the simplest and most intuitive instantiation of the coalgebraic framework in the case of deterministic automata, but the natural next step would be to generalise our results to a wider class of transition systems. A good starting point could be to consider coalgebras for \emph{polynomial} endofunctors, in the fashion of the framework of \emph{Kleene Coalgebra}~\cite{Silva:2010:Kleene}. Alternatively, it would be interesting to look at recent work on a family of process algebras parametric on an equational theory representing the branching constructs~\cite{Schmid:2022:Processes} and study its generalisations to quantitative equational theories. A related and interesting avenue for future work are equational axiomatisations of behavioural equivalence of Guarded Kleene Algebra with Tests (\textsf{GKAT})~\cite{Smolka:2020:Guarded,Schmid:2022:Processes} and its probabilistic extension (\textsf{ProbGKAT})~\cite{Rozowski:2023:Probabilistic}, whose completeness results rely on a powerful uniqueness of solutions axiom (\textsf{UA}). The soundness of \textsf{UA} in both cases is shown through an involved argument relying on equipping the transition systems giving the operational semantics with a form of behavioural distance and showing that recursive specifications describing finite-state systems correspond to certain contractive mappings. It may be more sensible, particularly for \textsf{ProbGKAT} to consider quantitative axiomatisations in the first place and give the proofs of completeness through the pattern explored in this paper.




\bibliography{lipics-v2021-sample-article}
\newpage
\appendix
\section{Preliminaries}
\adequacy*
\begin{proof}
    Let $\iota : \gen{e}{\mathcal{R}} \hookrightarrow \Exp$ be the canonical inclusion homomorphism. Composing it with $L_{\mathcal{R}}$ a unique homomorphism from $\mathcal{R}$ into the final automaton $\mathcal{L}$ yields a homomorphism $L_{\mathcal{R}} \circ \iota$ from $\gen{e}{\mathcal{R}}$ to the final automaton, which is the same as $L_{\gen{e}{\mathcal{R}}}$, since homomorphisms into the final automaton are unique. Using \cref{lem:adequacy} we can show the following: 
    \[
        \sem{e} = L_{\mathcal{R}}(e)=  L_{\mathcal{R}}(\iota(e)) = L_{\gen{e}{\mathcal{R}}}(e)
    \]
\end{proof}
\section{Behavioural distances}
\homomorphismsisometries*
\begin{proof}
    Follows from \cite[Theorem~5.23]{Baldan:2018:Coalgebraic} and \cite[Lemma~6.1]{Baldan:2018:Coalgebraic}
\end{proof}
\behaviouralpseudometric*
\begin{proof}
    \begin{enumerate}
        \item Follows from \cite[Example~5.33]{Baldan:2018:Coalgebraic}, \cite[Lemma~5.24]{Baldan:2018:Coalgebraic} and \cite[Theorem~6.10]{Baldan:2018:Coalgebraic}.
        \item Let $m, m' \in M$. Recall that by \cref{prop:homomorphisms_isometries}, $d_{\mathcal{M}}(m,m') = d_{\mathcal{L}}(L_{\mathcal{M}}(m),L_{\mathcal{M}}(m'))$. The implication $d_\mathcal{M}(m,m') = 0 \implies L_{\mathcal{M}}(m) = L_{\mathcal{M}}(m')$ follows from the fact that $d_{\mathcal{L}}$ is a metric. The converse holds because of \cite[Lemma~6.6]{Baldan:2018:Coalgebraic}.
        
        \end{enumerate}
\end{proof}
\section{Quantitative algebras}
\quantitativealgebra*
\begin{proof}
Since $d_{\mathcal{L}}$ is a pseudometric, then $d^{\B} = d_{\mathcal{L}} \circ (\llbracket - \rrbracket \times \llbracket - \rrbracket)$ is also a pseudometric. We now verify the nonexpansivity of interpretations of operations with non-zero arity. Let $e,f,g,h \in \Exp$, $d^{\B}(e,g) \leq \e$ and $d^{\B}(f,h) \leq \e$. 

\begin{enumerate}
    \item We show that $d^{\B}(e + f, g + h) \leq \e$. In the case when $\e = 0$, the proof simplifies to showing that if $\llbracket e \rrbracket = \llbracket g \rrbracket$ and $\llbracket f \rrbracket = \llbracket h \rrbracket$ then $\llbracket e + f\rrbracket = \llbracket g + h \rrbracket$, which holds immediately. 
    For the remaining case, when $\e > 0$, let $n = \lceil \log_{\lambda} \e \rceil$. Observe that in such a case, we have that $d^{\B}(e,g) \leq \lambda^n$ and $d^{\B}(f,h) \leq \lambda^n$. Using it, we can deduce that $\sem{e}$ and $\sem{g}$ (and similarly $\sem{f}$ and $\sem{h}$) agree on all words of length strictly below $n$ (because the shortest word for which they disagree is at least of length $n$). To put that more formally:
    $$\forall w \in A^{\ast} \ldotp |w| < n \implies \left(w \in \sem{e} \iff w \in \sem{g} \right) \wedge  \left(w \in \sem{f} \iff w \in \sem{h} \right) $$
    Let $w \in A^\ast$, such that $|w|<n$. We have that 
    \begin{align*}
        w \in \sem{e + f} &\iff w \in \sem{e} \cup \sem{f} \iff \left(w \in \sem {e}\right) \vee \left(w \in \sem{f}\right) \\
        &\iff \left(w \in \sem{g}\right) \vee \left(w \in \sem{h}\right) \tag{$|w| < n$} \\
        &\iff w \in \sem{g + h}
    \end{align*}
    And thus $\sem{e + f}$ and $\sem{g + h}$ agree on all words of the length below $n$ and therefore $d^{\B}(e + f, g + h) \leq \lambda^{n} \leq \e$.
    
    \item The case for $\e = 0$ holds immediately through the same line of reasoning as before, relying on well-definedness of $\diamond$ (concatenation) operation on languages. We focus on the remaining case, making the same simplification as before, that is we assume that $\sem{e}$ and $\sem{g}$ (as well as $\sem{f}$ and $\sem{h}$) agree on all word of length strictly below $n$). We show that $\sem{e \seq f}$ and $\sem{g \seq h}$ also agree on all words of the length strictly less than $n$. Let $w \in A^{\ast}$, such that $|w|<n$. We have that: 
    \begin{align*}
        w \in \sem{e \seq f} &\iff w \in \sem{e} \diamond \sem{f}\\ &\iff
        \left(\exists u,v \in A^*. w=uv \wedge w \in \sem{e} \wedge v \in \sem{f}\right)\\
        &\iff  \left(\exists u,v \in A^*. w=uv \wedge w \in \sem{g} \wedge v \in \sem{h}\right) \tag{ $|u|< n$ and $|v| < n$} \\
        &\iff w \in \sem{g} \diamond \sem{h} \iff w \in \sem {g \seq h}
    \end{align*}
    \item We use the same line of reasoning as before. Assume that $\sem{e}$ and $\sem{g}$ agree on all words of length below $n$. Let $w \in A^*$, such that $|w| < n$. We have the following:
    \begin{align*}
        w \in \sem{e^*} &\iff w \in \sem{e}^*\\ 
        &\iff w = \e \vee \left(\exists k \geq 1 \ldotp \exists u_1, \dots, u_k \in A^* \ldotp w = u_1\dots u_k \right.\\&\left. \qquad\qquad\wedge u_1 \in \sem{e} \wedge \dots \wedge u_k \in \sem{e}\right) \\
        &\iff w = \e \vee \left(\exists k \geq 1 \ldotp \exists u_1, \dots, u_k \in A^*\ldotp w = u_1\dots u_k \right.\\&\left. \qquad\qquad\wedge u_1 \in \sem{g} \wedge \dots \wedge u_k \in \sem{g}\right) \tag{$|u_1| < n, \dots, |u_k| < n$ }\\
        &\iff w \in \sem{g}^* \iff w \in \sem{g^*}
    \end{align*}
\end{enumerate}
\end{proof}
\generalisedpref*
\begin{proof}
	By induction on $e \in \Exp$. The cases when $e = 1$ and $e = (e_1)^*$ are not possible, because of the assumption that $\epsilon \notin \sem{e}$. 
	
	\fbox{$e = 0$} 
	Because of the $(\mathsf{0S})$ axiom, we can derive that $e \seq f \equiv_0 0 \equiv_0 0 \seq g \equiv_0 e \seq g$. We can show the desired conclusion, using $(\Max)$ axiom.
	
	\fbox{$e = a$} Holds immediately, because of $(\dPref)$ axiom.
	
	\fbox{$e = e_1 + e_2$} Because of the assumption, both $ \epsilon \notin \sem{e_1}$ and $ \epsilon \notin \sem{e_2}$. Using the induction hypothesis, we can derive that $\vdash e_1 \seq f \equiv_{\e'} e_1 \seq g$ and $e_2 \seq f \equiv_{\e'} e_2 \seq g$. We can apply the $(\mathsf{SL5})$ axiom to derive that $\vdash e_1 \seq f + e_2 \seq f \equiv_{\e'} e_1 \seq g + s_2 \seq g$. Finally, we can apply the $(\mathsf{D2})$ axiom to both sides through \textsf{(Triang)} and derive $\vdash (e_1 + e_2) \seq f \equiv_{\e'} (e_1 + e_2) \seq g$ as desired.
	
	\fbox{$e = e_1 \seq e_2$} Because of the assumption, $\epsilon \notin \sem{e_1}$ or $\epsilon \notin \sem{e_2}$. First, let's consider the subcase when both $\epsilon \notin \sem{e_1}$ and $\epsilon \notin \sem{e_2}$. By induction hypothesis, we have that $\vdash e_2 \seq f \equiv_{\e'} e_2 \seq g$. Since $\lambda \in ]0,1[$, we have that $\lambda \cdot \e' < \e'$. Because of that, we can apply induction hypothesis again and obtain $\vdash  e_1 \seq e_2 \seq f \equiv_{\e'} e_1 \seq e_2 \seq g$. Now, let's consider the subcase when $\epsilon \notin \sem{e_1}$, but $\epsilon \in \sem{e_2}$. Using $(\Nexp)$, we can obtain $\vdash e_2 \seq f \equiv_{\e} e_2 \seq g $. Then, since $\epsilon \notin \sem{e_1}$, we can apply the induction hypothesis and obtain $\vdash e_1 \seq e_2 \seq f \equiv_{\e'} e_1 \seq e_2 \seq g$ as desired. The remaining subcase, when  $\epsilon \notin \sem{e_2}$ but $\epsilon \in \sem{e_1}$ is symmetric and therefore omitted. 
	 

\end{proof}
\starlemma*
\begin{proof}
	By induction. If $n = 0$, then using $(\Top)$, we can immediately conclude that $\vdash g \equiv_1 e^{*} \seq f$. Since by the assumption $\e \geq \lambda^0 = 1$, we can apply \textsf{(\Max)} and obtain $\vdash g \equiv_{\e} e^* \seq f $. 
	
	For the inductive cases, we have that $\e \geq \lambda^{n+1}$ and hence $\e \cdot \lambda^{-1} \geq \lambda^n $. We cannot instantly apply the induction hypothesis, as $\e \cdot \lambda^{-1}$ is not guaranteed to be rational. Instead, we will use \textsf{(\Cont)} of quantitative deduction systems. Let $\e'$ be an arbitrary rational number strictly greater than $\e$ and let $\{r_n\}_{n \in \bN}$ be any decreasing sequence of rationals that converges to $\lambda^{-1}$. Pick element $r_N$ of that sequence that satisfies that $\e' \geq \e \cdot \lambda \cdot r_N$. We can always pick such an element, as $\{r_n\}_{n \in \bN}$ gets arbitrarily close to $\lambda^{-1}$, so $\{\lambda \cdot r_n\}_{n \in \bN}$ is a decreasing sequence that converges to $1$ and additionally we have that $\e' > \e$, so $\frac{\e'}{\e} >1$.	From the definition of the limit, we know that there exists large enough $N \in \bN$, such that $|\lambda \cdot r_{N}-1| \leq \frac{\e'}{\e} - 1$. We can simplify the above relying on the fact that $\lambda \cdot r_n \geq 1$ for all $n \in \bN$ and obtain that indeed $\e' \geq \e \cdot \lambda \cdot r_N$ as desired. 
	
		Since $\e \cdot r_N \geq \e \cdot \lambda^{-1} \geq \lambda^{n}$, we can apply induction hypothesis and obtain that $\vdash e \equiv_{\e \cdot r_N} g^* \seq f$. Since $\epsilon \notin \sem{e}$, we can now use \Cref{lem:generalised_pref} to derive that $e \seq g \equiv_{\e'} e \seq e^{*} \seq f$. Since we have shown it for arbitrary $\e' > \e$, we can use \textsf{(Cont)} rule of the quantitative deduction systems and conclude that $\vdash e \seq g \equiv_{\e} e \seq e^* \seq f$, as desired.

		Then, because of $(\Refl)$, we have that $\vdash f \equiv_0 f$. We can combine those two quantitative inferences using $(\mathsf{SL5})$ axiom in order to get $\vdash e \seq g + f \equiv_{\e} e \seq e^* \seq f  + f $. By assumption, the left hand side satisfies that $\vdash g \equiv_0 e \seq g + f$. Now, consider the right hand side of that quantitative inference:
	\begin{align*}
		\vdash e \seq e^{*} \seq f + f &\equiv_0 e \seq e^* \seq f + 1 \seq f \tag{$\mathsf{1S}$} \\
		&\equiv_0 (e \seq e^* + 1) \seq f \tag{$\mathsf{D2}$} \\
		&\equiv_0 e^{*} \seq f \tag{$\mathsf{Unroll}$}
	\end{align*}
	We can combine the reasoning above and conclude (using $\Triang$) that $\vdash g \equiv_{\e} e^* \seq f$.
\end{proof}
\section{Completeness}
\iterativecalc*
\begin{proof}
    Recall that $d^{\B} = d_{\mathcal{L}} \circ (\sem{-} \times \sem{-})$. Moreover, the canonical quotient map $[-]_{\acirel} : \Exp \to \aciq$ is an automaton homomorphism from $\mathcal{R}$ to $\mathcal{Q}$. Composing it with a language assigning homomorphism $L_{\mathcal{Q}} : \aciq \to \pset(A^\ast)$ yields an automaton homomorphism $L_{\mathcal{Q}} \circ [-]_{\acirel} : \Exp \to \pset(A^\ast)$, which by finality must be the same as $L_{\mathcal{R}} : \Exp \to \pset(A^\ast)$, and thus (by \cref{lem:adequacy}) the same as $\sem{-}$. Using the fact that automata homomorphisms are isometric mappings between pseudometric spaces obtained through constructing behavioural pseudometrics (\cref{prop:homomorphisms_isometries}), we can derive the following:
    \begin{align*}
    d^{\B} &= d_{\mathcal{L}} \circ (\sem{-} \times \sem{-})\\
    &= d_{\mathcal{L}} \circ ((L_{\mathcal{Q}} \circ ([-]_{\acirel}) \times (L_{\mathcal{Q}} \circ ([-]_{\acirel}))\\
    &= d_{\mathcal{L}} \circ (L_{\mathcal{Q}} \times L_{\mathcal{Q}}) \circ ([-]_{\acirel} \times [-]_{\acirel})\\
    &= d_{\mathcal{Q}} \circ ([-]_{\acirel} \times [-]_{\acirel}) \tag{\cref{prop:homomorphisms_isometries}}
    \end{align*}
    Additionally, since $\gen{[e]_{\acirel},[f]_{\acirel}}{\mathcal{Q}}$ is the subautomaton of $\mathcal{Q}$ containing all the derivatives (modulo $\acirel$) of $e$ and $f$, the canonical inclusion map $\iota : \gen{[e]_{\acirel},[f]_{\acirel}}{\mathcal{Q}} \hookrightarrow \mathcal{Q}$ is an automaton homomorphism. Because $\iota([e]_{\acirel})=[e]_{\acirel}$ and $\iota([f]_{\acirel})=[f]_{\acirel}$, we can again use \cref{prop:homomorphisms_isometries} to show that
    $$
    d^{\B}(e,f) = d_{\mathcal{Q}}([e]_{\acirel}, [f]_{\acirel}) = d_{\gen{[e]_{\acirel},[f]_{\acirel}}{\mathcal{Q}}}([e]_{\acirel}, [f]_{\acirel})
    $$
    Because of the fact that $([e]_{\acirel}, [f]_{\acirel})$ has finitely many states (\cref{lem:locally_finite}) then by \Cref{lem: uniquefp}, \Cref{lem: cocontinuous} and \Cref{thm: kleene} one can use the simplified iterative formula to calculate the behavioural pseudometric on $\gen{[e]_{\acirel},[f]_{\acirel}}{\mathcal{Q}}$.
\end{proof}

\fundamental*
\begin{proof}
We proceed by induction on $e \in \Exp$. The base cases are trivial, so we just demonstrate the case when $e = a$ for $a \in A$. 

    \fbox{$e = a$} 
    \begin{align*}
       \vdash a &\equiv_{0} a \seq 1 \tag{$\mathsf{S1}$}\\
        &\equiv_0 a \seq 1 + 0 \tag{$\mathsf{SL4}$}\\
        &\equiv_0 a \seq (a)_a + o_{\mathcal{R}}(a) \tag{Def. of derivatives}
    \end{align*}
    Now, observe that for all $a' \in A \setminus \{a\}$, we have that $(a)_a' = 0$. Using axiom ($\mathsf{S0}$), $a' \seq (a)_{a'} \equiv_0 0$. Through induction on the size of $A \setminus \{a\}$, using axioms ($\mathsf{SL1}$) and ($\mathsf{SL4}$), one can show that $\sum_{a' \in A \setminus \{a \}} a' \seq (a)_{a'} \equiv_0 0$. We can now combine the intermediate results into the following:
    \begin{align*}
       \vdash a &\equiv_0 a \seq (a)_a + o_{\mathcal{R}}(a) \tag{Previous derivations} \\
        &\equiv_0  a \seq (a)_a + o_{\mathcal{R}}(a) + 0 \tag{$\mathsf{SL1}$} \\
        &\equiv_0 a \seq (a)_a + 0 + o_{\mathcal{R}}(a) \tag{$\mathsf{SL2}$} \\
        &\equiv_0 a \seq (a)_a + \sum_{a' \in A \setminus \{a\}} a' \seq (a)_{a'} + o_{\mathcal{R}}(a) \tag{Previous inductive argument} \\
        &\equiv_0 \sum_{a' \in A} a' \seq (a)_{a'} + o_{\mathcal{R}}(a) \tag{Def. of $n$-ary sum}
    \end{align*}

    \fbox{$e = f + g$}
    \begin{align*}
        \vdash f + g &\equiv_0 \left(\sum_{a \in A} a \seq (f)_a + o_{\mathcal{R}}(f)\right) + \left(\sum_{a \in A} a \seq (g)_a + o_{\mathcal{R}}(g)\right) \tag{Induction hypothesis} \\
        &\equiv_0 \sum_{a \in A}\left( a \seq (f)_a + a \seq (g)_a\right) + \left( o_{\mathcal{R}}(f) + o_{\mathcal{R}}(g)\right) \tag{$\mathsf{S3}$}\\
        &\equiv_0 \sum_{a \in A}\left( a \seq \left((f)_a + (g)_a\right)\right) + \left( o_{\mathcal{R}}(f) + o_{\mathcal{R}}(g)\right) \tag{$\mathsf{D1}$}\\
        &\equiv_0 \sum_{a \in A}\left( a \seq \left((f + g)_a\right)\right) + o_{\mathcal{R}}(f + g)  \tag{Def. of derivatives}\\
    \end{align*}

    \fbox{$e = f \seq g $}
    \begin{align*}
        \vdash f \seq g &\equiv_0 \left(\sum_{a \in A} a \seq (f)_a + o_{\mathcal{R}}(f)\right) \seq g \tag{Induction hypothesis} \\
        &\equiv_0 \sum_{a \in A} a \seq (f)_a \seq g  + o_{\mathcal{R}}(f) \seq g \tag{$\mathsf{D2}$} \\
        &\equiv_0 \sum_{a \in A} a \seq (f)_a \seq g  + o_{\mathcal{R}}(f) \seq \left(\sum_{a \in A} a \seq (g)_a + o_{\mathcal{R}}(g) \right) \tag{Induction hypothesis} \\
        &\equiv_0 \sum_{a \in A} a \seq (f)_a \seq g  +  \left(\sum_{a \in A}  o_{\mathcal{R}}(f) \seq a \seq (g)_a + o_{\mathcal{R}}(f) \seq o_{\mathcal{R}}(g) \right) \tag{$\mathsf{D1}$} \\
        &\equiv_0 \sum_{a \in A} a \seq (f)_a \seq g  +  \left(\sum_{a \in A} a \seq  o_{\mathcal{R}}(f) \seq (g)_a + o_{\mathcal{R}}(f) \seq o_{\mathcal{R}}(g) \right) \tag{($\mathsf{1S}$) and ($\mathsf{S1}$) if $o_{\mathcal{R}}(f)=1$ or ($\mathsf{0S}$) and ($\mathsf{S0}$) if $o_{\mathcal{R}}(f)=0$} \\
        &\equiv_0 \sum_{a \in A} \left(a \seq (f)_a \seq g  +  a \seq  o_{\mathcal{R}}(f) \seq (g)_a \right) + o_{\mathcal{R}}(f) \seq o_{\mathcal{R}}(g) \tag{$\mathsf{SL3}$}\\
        &\equiv_0 \sum_{a \in A} a \seq \left((f)_a \seq g  +  o_{\mathcal{R}}(f) \seq (g)_a \right) + o_{\mathcal{R}}(f) \seq o_{\mathcal{R}}(g) \tag{$\mathsf{D1}$}\\
        &\equiv_0 \sum_{a \in A} a \seq \left(f \seq g\right)_a + o_{\mathcal{R}}(f \seq g) \tag{Def. of derivatives}\\
    \end{align*}

    \fbox{$e = f^\ast$}
    \begin{align*}
        \vdash f^{\ast} &\equiv_0 \left( \sum_{a \in A} a \seq (f)_a + o_{\mathcal{R}}(f)\right)^{\ast} \tag{Induction hypothesis}\\
        &\equiv_0 \left(\sum_{a \in A} a \seq (f)_a\right)^{\ast} \tag{($\mathsf{Tight}$) if $o_{\mathcal{R}}=1$ or ($\mathsf{SL4}$) if $o_{\mathcal{R}}=0$} \\
        &\equiv_0  \left(\sum_{a \in A} a \seq (f)_a\right) \seq  \left(\sum_{a \in A} a \seq (f)_a\right)^{\ast} + 1 \tag{$\mathsf{Unroll}$}\\
        &\equiv_0  \left(\sum_{a \in A} a \seq (f)_a\right) \seq  f^{\ast} + 1 \tag{Steps 1-2 and $(\Nexp)$} \\
        &\equiv_0 \left(\sum_{a \in A} a \seq (f)_a \seq  f^{\ast} \right)  + 1 \tag{$\mathsf{D2}$}\\
        &\equiv_0 \left(\sum_{a \in A} a \seq \left(f^{\ast}\right)_a \right)  + o_{\mathcal{R}}\left(f^{\ast}\right) \tag{Def. of derivatives}\\
    \end{align*}
    
\end{proof}


\end{document}